\newcommand{\C}{\mathcal{C}}
\newcommand{\DEC}{\operatorname{DEC}}
\newcommand{\F}{\mathbb{F}}
\newcommand{\Exp}[2][]{\mathbb{E} \ifx\\#1\else_{#1}\fi\left[ #2 \right]}
\newcommand{\length}{\operatorname{length}}
\renewcommand{\O}{\mathcal{O}}
\renewcommand{\P}{\mathcal{P}}
\newcommand{\RS}{\mathcal{RS}}
\newcommand{\spane}{\operatorname{span}}
\newcommand{\I}{\mathcal{I}}
\newcommand{\M}{\mathcal{M}}
\newtheorem{definition}{Definition}
\newtheorem{lemma}[definition]{Lemma}
\newtheorem{corollary}[definition]{Corollary}
\newtheorem{theorem}[definition]{Theorem}
\newtheorem{claim}[definition]{Claim}
\date{}
\title{New Amortized Cell-Probe Lower Bounds for Dynamic Problems}
\author{
  Sayan Bhattacharya\\
  University of Warwick\\
  Warwick, United Kingdom
  \and
  Monika Henzinger \\
  University of Vienna \\
  Faculty of Computer Science \\
  Vienna, Austria
  \and 
  Stefan Neumann \\
  University of Vienna \\
  Faculty of Computer Science \\
  Vienna, Austria
}
\begin{document}

\maketitle

\begin{abstract}
We build upon the recent papers by Weinstein and Yu~\cite{weinstein2016amortized},
Larsen~\cite{larsen2012higher}, and Clifford et al.~\cite{clifford2015new} to present a general framework that
gives {\em amortized} lower bounds on the update and query times of dynamic data structures.
Using our framework, we present two concrete results.
\begin{enumerate}
\item For the dynamic polynomial evaluation problem, where the polynomial is defined over a finite field of size $n^{1+\Omega(1)}$ and has degree $n$, any dynamic data structure must either have an {\em amortized} update time of $\Omega((\lg n/\lg \lg n)^2)$ or an {\em amortized} query time of $\Omega((\lg n/\lg \lg n)^2)$. 
\item For the dynamic online matrix vector multiplication problem, where we get an $n \times n$ matrix whose entires are drawn from a finite field of size $n^{\Theta(1)}$, any dynamic data structure must either have an {\em amortized} update time of $\Omega((\lg n/\lg \lg n)^2)$ or an {\em amortized} query time of $\Omega(n \cdot (\lg n/\lg \lg n)^2)$. 
\end{enumerate}
For these two problems, the previous works by Larsen~\cite{larsen2012higher} and Clifford et al.\cite{clifford2015new} gave
the same lower bounds, but only for {\em worst case} update and query times. Our bounds match the highest unconditional lower bounds
known till date for any dynamic  problem in the cell-probe model.
\end{abstract}

\noindent \textbf{Keywords:} Dynamic Algorithms; Cell-Probe Lower Bounds; Polynomial Evaluation; Online Matrix Vector Multiplication
\smallskip

\thanks{The research leading to these results has received funding from the
	European Research Council under the European Union’s Seventh Framework
		Programme (FP/2007-2013) / ERC Grant Agreement no.\ 340506.
	The third author gratefully acknowledges the financial support of the
	Doctoral Programme ``Vienna Graduate School on Computational Optimization''
		which is funded by Austrian Science Fund  (FWF, project no. W1260-N35).
}

\section{Introduction}

In an abstract \emph{dynamic problem}, we want  a data structure that supports two types of \emph{input} operations:
 \emph{updates} and  \emph{queries}.  The time taken  to handle an update (resp. query) operation is known as the {\em update time} (resp. {\em query time}). The main goal in this field is to design data structures  with small update and query times for fundamental dynamic problems.
 
\noindent {\bf The Cell-Probe Model.}
The focus of this paper is on proving {\em unconditional} lower bounds on the update and query times of dynamic data structures.\footnote{In contrast, the recent papers~\cite{AbboudW14,OMV} prove {\em conditional} lower bounds assuming  SETH and OMv conjectures.}
All such known lower bounds work in the \emph{cell-probe} model of computation~\cite{Yao}. 
In this model, the memory is organized into a set of memory {\em cells}. Each
cell stores $w = \Theta(\lg n)$ bits of information. For most dynamic problems,
specifying an input to the data structure requires $\Theta(\lg n)$ bits because
the number of inputs is polynomial in $n$. Moreover, usually
the total number of cells in the memory of the data structure is polynomially bounded by $n$.
Thus, it takes $w$ bits to specify the address of a cell.
Whenever a cell is read or written, we say that it is \emph{probed}.
The data structure supports an input operation by probing cells in the memory.
If the input operation is a query, then the data structure also outputs an answer to that query after probing cells.
The time taken to handle an input operation is measured in terms of the number of cell-probes made by the data structure.
Any other computation, such as computation inside the CPU, comes free of cost.
Intuitively, this model captures the {\em communication cost} between the CPU and the memory, and is, thus, very suitable for information theoretic arguments.

\noindent {\bf Previous Work.}
Proving  cell-probe lower bounds turns out to be technically very challenging.
One of the first major papers in this area was by Fredman and
Saks~\cite{fredman1989cell}, who introduced the \emph{chronogram
technique}~\cite{MarkedAncestor}.  Here, we construct a random sequence of
$\text{poly}(n)$ updates followed by one random query.  Going backward in
time, the sequence of updates are partitioned into $\Theta(\lg n/\lg \lg n)$
many epochs whose sizes keep increasing exponentially. To be more specific,
for $i \geq 1$, epoch $i$ consists of  $\delta^i$ consecutive updates, where
$\delta := \text{poly} \lg n$. We then {\em mark} each memory cell by the
(unique) epoch in which it was last probed. Finally, we show that if the
worst case update time is $\text{poly} \log n$, then the random query at the
end  must probe at least $\Omega(t)$ cells marked by each epoch, where $t$
is a parameter. Summing over all the epochs, this gives a lower bound of
$\Omega(t \cdot \lg n/\lg \lg n)$ on the worst case query time.
By interleaving queries and updates and the use of a counting argument,
they show how their worst case lower bound implies an amortized lower
bound.

An exciting development in this field has been the {\em cell-sampling}
technique introduced by Panigrahy et al.~\cite{panigrahy10lower} for static data
structures.
Larsen~\cite{larsen2012cell,larsen2012higher} used this technique to provide new
lower bounds for dynamic data structures. He showed how to prove a lower bound of
$t = \Omega(\lg n/\lg \lg n)$ for a {\em single epoch} in the chronogram.
Summing over all the epochs, this gives a lower bound of $\Omega((\lg n/\lg \lg n)^2)$ on the worst
case query time.  At its core, the cell-sampling technique is an encoding
argument. It says that if the query time were small, then  there  must exist a
very small subset of memory cells (say, $\C^*$) from which one can infer the
answers to a large number of queries.  Next, it exploits the fact that for {\em
some} dynamic problems, the answers to a sufficiently large number of
queries completely determine the past updates.\footnote{Intuitively, these
	dynamic problems have the property that the vector of answers is
	$\approx n$-wise independent over a random sequence of updates. That is,
	\emph{any} $n$ answers to queries (essentially) determine the input
	sequence.}
For such problems, therefore,
if the query time were too small, then we could potentially give an encoding
of the past updates by specifying the addresses and contents of the cells in
$\C^*$. Since the size of $\C^*$ is very small, this encoding would use
fewer bits than the entropy of the past updates. As this  leads to a
contradiction, we are left with no other choice but to conclude that the
query time must be large.

For other major results, see
~\cite{patrascu2006logarithmic,patrascu2011dont,Yu16}.  None of them, however,
can give $\omega(\lg n)$ cell-probe lower bounds.  For only three dynamic
problems $\omega(\lg n)$ lower bounds are known, and they all follow from
the cell-sampling technique: (1) 2d range counting, (2) polynomial
evaluation and (3) OMv. Initially, all these lower bounds were for worst
case update and query times.

\noindent {\bf Amortized lower bounds.} For many problems there is a large gap
between worst case and amortized lower bounds.  Furthermore, from the
perspective of a practitioner, amortized data structures are in many
applications as useful as worst case data structures. Very recently,
Weinstein and Yu~\cite{weinstein2016amortized} showed how to make Larsen's lower bound
for 2d range counting work in the amortized setting. Their lower bound also
applies to data structures with very high error probability. But it remained
open whether one can get such amortized lower bounds for the remaining two
problems that are known to be solvable via the cell-sampling technique: (1)
polynomial evaluation and (2) OMv. We resolve this question in the affirmative.
In addition, we show a generic template for proving such amortized lower bounds.
Specifically, in Definition~\ref{def:well:behave}, we introduce the notion of a
{\em well-behaved} input sequence for a dynamic problem. We prove that if one
can show the existence of a well-behaved input sequence, then the corresponding
dynamic problem admits the desired amortized lower bound.

\noindent {\bf Remark.} Fredman and Saks~\cite{fredman1989cell} mention an
extension of the chronogram technique where the input sequence consists of
multiple queries perfectly interleaved with the updates. It might be possible to
get an alternate proof of our results using this approach.  However, we believe
that the framework we present is going to be useful for the cleaner exposition
of future cell-probe lower bounds.

\subsection{Our Results}
\label{Sec:GeneralFramework}

Recall that an {\em input} means either an {\em update} or a {\em query}.
Throughout this paper, we let $\O$ denote a (random) sequence of
$\Theta(n^{\alpha})$ inputs for the dynamic problem under consideration, where
$\alpha > 0$ is some constant.  Our goal is to give a lower bound of the
expected number of cell-probes that any data structure has to make while
processing the input sequence $\O$.  
By  abusing the  notation, we use the symbol  $X \subseteq \O$  to denote a
contiguous {\em interval}  of inputs in $\O$. We  let $|X|$ denote the size of
the interval $X$, which refers to the number of inputs in $X$. We define $P(X)$
to be the set of cells probed by a data structure while processing the inputs in
$X$.  We say that two intervals $X, Y \subseteq \O$ are {\em consecutive} if the
first input in $Y$ appears immediately after the last input in $X$.    For any
two consecutive intervals $X, Y \subseteq \O$, the counter $C(X, Y)$ denotes the
number of times the following event occurs:
While answering a {\em query} in $Y$, the data structure reads a cell
that was updated in $X$, and is probed in $Y$, but the cell was not probed
earlier in $Y$.\footnote{Note that the
	authors in~\cite{weinstein2016amortized} do not use this counter. Instead
	they consider the set of cells $P(X) \cap P(Y)$. In contrast, we only
	consider the cells probed in $Y$ for answering {\em queries}.}
The reason for this definition of $C(X,Y)$ is to avoid double counting cells
that are written during $X$ and then probed multiple times during $Y$.

In Definition~\ref{def:well:behave}, we introduce the concept of a
``well-behaved'' input sequence, which implicitly captures the main idea behind
the dynamic lower bounds  obtained by the chronogram method.

\begin{definition}
\label{def:communicationLowerBound}
\label{prop:main}
\label{def:well:behave}
Let $\kappa \in (0,1)$ and $c, \alpha, \beta > 0$ be constants, and define $\gamma := \lg^{\beta} n$. Fix a dynamic problem $\P$, and consider a (random) input sequence $\O$ of size $\Theta(n^{\alpha})$ for this problem. Such an input sequence $\O$  is called ``well-behaved''  iff  every dynamic data structure  for   problem $\P$ satisfies the following property while processing the  updates and queries in $\O$. 
  \begin{itemize}
 \item For every pair of consecutive intervals  $X, Y \subseteq \mathcal{O}$ with
  $|X| \simeq \gamma \cdot |Y|$ and $|Y| \geq |\O|^{\kappa}$, at least one of these  three conditions  is violated:
  \begin{enumerate}
		\item  $\Exp{ |P(X)| } < |X| \cdot \lg^2 n$,
		\item $\Exp{ |P(Y)| } < |Y| \cdot \lg^2 n$, and
		\item $\Exp{ C(X,Y) } < \frac{1}{200 (c + 1002)} \cdot |Y| \cdot \frac{\lg n}{\lg \lg n}$.
	\end{enumerate}
  \end{itemize}
\end{definition}

We now explain this in a bit more details.
Suppose that the input sequence $\O$ consists of a batch
of random updates followed by one random query. Build a chronogram on top of
this input sequence $\O$ and consider any {\em sufficiently large} epoch $i$ in
this chronogram, ensuring that the number of inputs appearing after epoch $i$ is
$\Omega(|\O|^{\kappa})$ for some small constant $\kappa \in (0, 1)$.  The total
number of such large epochs is still $\Omega(\lg n/\lg\lg n)$. Hence,  we do not
incur any asymptotic loss in the derived lower bound if we focus only on these
large epochs. Let $X$ denote the sequence of inputs in epoch $i$, and let $Y$
denote the sequence of all the inputs in $\O$ that appear after $X$.  As we go
back in time, the sizes of the epochs in a chronogram increase exponentially in
some $\gamma := \text{poly} \lg n$ factor. Thus, we have $|X| \simeq \gamma
\cdot |Y|$ and $|Y| \geq |\O|^{\kappa}$. Intuitively, a chronogram based lower bound
basically proves the following statement.

\begin{claim}[Informal]
\label{cl:ch}
If the update time of the data structure is some small $\text{poly} \lg n$, say $\lg^2
n$,  then to answer the random query at the end of the input sequence, the data
structure must probe at least $\Omega(t)$ cells that were probed in $X$ but not
probed before in $Y$, for some parameter $t$.
\end{claim}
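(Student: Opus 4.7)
The plan is to prove Claim~\ref{cl:ch} by a cell-sampling plus encoding argument applied to a single sufficiently large epoch $i$ in the chronogram, with the per-epoch parameter $t := c_0 \cdot \lg n / \lg \lg n$ for a small constant $c_0 > 0$. I fix such an epoch, letting $X$ be its updates and $Y$ the suffix of $\mathcal{O}$ that follows. Assuming Condition~1 of Definition~\ref{def:well:behave}, I would argue by contradiction: suppose the expected number of cells in $P(X)$ that the random query reads freshly in $Y$ is at most $t$, and derive an encoding of $X$ whose length falls strictly below its Shannon entropy.

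The core step is cell sampling. A Markov argument over the randomness of $\mathcal{O}$ lets me condition on a constant-probability good event on which $|P(X)| = O(|X|\lg^2 n)$ and the query probes only $O(t)$ fresh cells of $P(X)$. I would then sample each cell of $P(X)$ independently with probability $p := \lg^{-c'} n$ for a large constant $c'$, obtaining a sample $\mathcal{C}^*$. For any query whose fresh probes into $P(X)$ number at most $O(t)$, the probability that all of them land in $\mathcal{C}^*$ is at least $p^{O(t)}$; averaging over a query universe of size $N = n^{\Theta(1)}$ yields $\gtrsim N p^{O(t)} \ge |X|$ ``recoverable'' queries whenever $t \le (\lg n)/(C' \lg\lg n)$ for an appropriate $C'$. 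The encoding of $X$ then consists of (i) the addresses and contents of $\mathcal{C}^*$, using $O(|\mathcal{C}^*|\cdot \lg n) = O(p|X|\lg^3 n)$ bits, and (ii) the indices of $|X|$ recoverable queries, using $O(|X|\lg(N/|X|))$ bits. The decoder, knowing everything outside $X$, simulates the data structure on the prefix of $\mathcal{O}$ up to $X$, fast-forwards through $X$ by using $\mathcal{C}^*$ as the state of the sampled cells at the end of $X$, and then simulates $Y$; any probe issued by a listed query into $P(X)$ is either a cell already rewritten in $Y$ (and so known to the decoder from its own simulation) or lies in $\mathcal{C}^*$. By the $\approx n$-wise independence property referenced in the introduction's footnote, the recovered answers uniquely determine $X$, and choosing $c'$ large enough makes the total encoding length strictly smaller than the entropy $\Omega(|X|\lg n)$ of $X$, producing the contradiction.

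The main obstacle I anticipate is executing this argument in the amortized setting. In the worst-case version of the proof one has pointwise bounds on the update time and on the number of $P(X)$-cells touched by the random query, so the good event has probability $1$; here only expectations are controlled, and I must combine Markov's inequality on the randomness of $\mathcal{O}$ with the cell-sampling randomness so that the good event still retains enough randomness in the query to lower-bound the number of recoverable queries. A related subtlety, which explains the specific form of $C(X,Y)$, is keeping the decoder's in-place simulation of $Y$ consistent: only the \emph{first} probe of each cell in $Y$ needs to be served from $\mathcal{C}^*$, since any later probe to the same cell follows either a read whose value the decoder has already cached or a write the decoder has itself executed. Restricting $C(X,Y)$ to such first probes is exactly what makes the ``charge per cell'' in the encoding argument tight and avoids double-counting. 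Once these bookkeeping issues are settled, balancing the constants $c'$, $C'$, $\kappa$ against $c$, $\alpha$, $\beta$ from Definition~\ref{def:well:behave} is routine.
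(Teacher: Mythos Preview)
The claim is stated informally in the paper only as motivation for Definition~\ref{def:well:behave}; the paper does not give a standalone proof of it. The closest formal analogues are Lemmas~\ref{lem:DynamicOMv} and~\ref{app:lem:PolyEval}, and your sketch does capture their overall shape: Markov on $|P(X)|$ and on the fresh-probe count, cell sampling inside $P(X)$, and an encoding of $X$ via the sample $\C^*$ together with identifiers of many recoverable queries.

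There is, however, a genuine gap in your decoding step. You write that the decoder ``simulates $Y$'' and that any probe by a listed query into $P(X)$ is either to a cell already rewritten in $Y$ (hence known from the decoder's own simulation) or lies in $\C^*$. But the decoder cannot faithfully simulate $Y$: the operations in $Y$ may themselves read cells in $P(X)\setminus\C^*$, and once such a read returns a stale (pre-$X$) value, the data structure's subsequent adaptive behavior---including every later write in $Y$---is corrupted. Consequently the ``already rewritten in $Y$'' case is not actually covered by your argument, and the decoder has no correct source for those cell contents.

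The paper's fix (Step~4 of the encoding in Section~\ref{sec:lem:DynamicOMv}, and the analogous step in Section~\ref{app:Sec:lem:PolyEval}) is to additionally encode the addresses and contents of \emph{all} cells in $P(Y_j)$ at the end of $Y_j$. This is precisely where the exponential gap $|X|\simeq\gamma|Y|$ with $\gamma=\lg^{\beta}n$ earns its keep: under Condition~2 of Definition~\ref{def:well:behave} one has $|P(Y_j)|\le|Y|\lg^2 n$, so this extra block costs $2w\cdot|Y|\lg^2 n=o(|X|)$ bits. The decoder then never simulates $Y_j$; when running the query algorithm it first checks $P(Y_j)$, then $\C^*$, and only then falls back to the pre-$X$ memory state. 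Your proposal invokes only Condition~1, but Condition~2 is exactly what makes encoding $P(Y_j)$ affordable, and without that block the decoding does not go through.
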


To notice the similarities between Claim~\ref{cl:ch} and
Definiton~\ref{def:well:behave}, interpret Definition~\ref{def:well:behave} as
follows: A well-behaved input sequence $\O$ is such that if any data structure
satisfies conditions (1) and (2), then it must violate condition (3). Seen in
this light, conditions (1) and (2) in Definition~\ref{def:well:behave} are
analogous to the statement that the update time of the data structure is at most
$\lg^2 n$. Similarly, the assertion that condition (3) must be violated becomes
analogous to the statement that while answering the random query at the end of
the input sequence, the data structure must read  many cells that were probed in
$X$ but not probed before in $Y$.

Definition~\ref{def:well:behave}, however, is much more general than
Claim~\ref{cl:ch}. For example,  typically a well-behaved sequence will
intersperse the queries with the updates, instead of having only one query at
the end of all the updates. Furthermore, as opposed to the classical chronogram
method, the input sequence $\O$ need not end with the interval $Y$. These
important distinctions between Definition~\ref{def:well:behave} and
Claim~\ref{cl:ch} help us derive {\em amortized} lower bounds using our
framework.

Theorem~\ref{Thm:GeneralFramework} shows how the existence of a well-behaved input sequence implies a cell-probe lower bound for the dynamic problem under consideration. Its proof appears in Section~\ref{sec:Thm:GeneralFramework}. We  use Theorem~\ref{Thm:GeneralFramework} to derive {\em amortized} cell-probe lower bounds for two concrete problems.

\begin{theorem}
\label{Thm:GeneralFramework}
	Let $\O$ be a well-behaved (random) input sequence for a dynamic  problem $\P$ as per  Definition~\ref{def:well:behave}.
	Then any dynamic data structure for  problem $\P$  needs to probe
	at least $\Omega(|\O| \cdot (\lg n / \lg \lg n)^2)$ cells in expectation while processing the input sequence $\O$.
\end{theorem}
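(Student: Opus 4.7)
My plan is to apply Definition~\ref{def:well:behave} at $\Theta(\lg n/\lg \lg n)$ geometrically spaced scales simultaneously, in the spirit of the chronogram decomposition but with pairs $(X,Y)$ chosen at every scale rather than a single nested epoch structure. Let $T := \mathbb{E}[|P(\O)|]$ denote the quantity to be lower bounded. For $\ell = \ell_0, \ldots, L$ set $s_\ell := \gamma^\ell$, where $\ell_0 = \lceil \kappa \log_\gamma |\O| \rceil$ is the smallest scale with $s_\ell \geq |\O|^\kappa$ and $L$ is the largest with $\gamma s_L \leq |\O|/2$; since $\gamma = \lg^\beta n$ and $|\O| = \Theta(n^\alpha)$, we have $L - \ell_0 = \Theta(\lg n/\lg \lg n)$. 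At each scale $\ell$, I pick a uniformly random shift $\sigma_\ell \in [0, s_\ell)$ and partition $\O$ into consecutive $Y$-blocks of length $s_\ell$; for each block $Y$ having at least $\gamma s_\ell$ inputs preceding it, let $X_Y^\ell$ be that preceding interval. Since $|Y| \geq |\O|^\kappa$ and $|X_Y^\ell| = \gamma |Y|$, Definition~\ref{def:well:behave} guarantees that every such pair $(X_Y^\ell, Y)$ violates at least one of the three conditions.

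At each scale $\ell$ I classify every $Y$-block by which condition it violates. Since $\sum_Y \mathbb{E}[|P(Y)|] = T$, a direct Markov argument shows that at most $T/(s_\ell \lg^2 n)$ blocks can violate condition (2); and since every position of $\O$ lies in at most $\gamma$ of the overlapping preceding-intervals $X_Y^\ell$, we have $\sum_Y \mathbb{E}[|P(X_Y^\ell)|] \leq \gamma T$, giving at most $T/(s_\ell \lg^2 n)$ blocks that violate (1). Hence at least $\Theta(|\O|/s_\ell) - O(T/(s_\ell \lg^2 n))$ blocks at scale $\ell$ must violate condition (3), each contributing $\Omega(s_\ell \lg n/\lg \lg n)$ to $\mathbb{E}[C(X_Y^\ell, Y)]$. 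Summing over all $\Theta(\lg n/\lg \lg n)$ scales,
\[
\sum_{\ell} \sum_{Y} \mathbb{E}\bigl[C(X_Y^\ell, Y)\bigr] \;\geq\; \Omega\!\left( |\O| \cdot \left(\tfrac{\lg n}{\lg \lg n}\right)^{2} \right) \;-\; O\!\left( \tfrac{T}{(\lg \lg n)^{2}} \right).
\]

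The main obstacle -- and the reason for the random shifts -- is to also upper bound this same sum by $O(T)$ without over-counting the same probe at multiple scales. Consider any probe at time $t$ that reads a cell whose most recent write occurred at $t^* < t$, and set $d := t - t^*$. Conditioned on $\sigma_\ell$, the probe contributes to $C(X_Y^\ell, Y)$ only when $t^*$ falls into the length-$s_{\ell+1}$ interval immediately preceding the scale-$\ell$ block containing $t$; an elementary case analysis shows the probability of this event is $1$ at the unique scale $\ell^* = \lfloor \log_\gamma d \rfloor$ (for which $s_{\ell^*} \leq d < s_{\ell^*+1}$), is $\min(1, d/s_\ell)$ at scales $\ell > \ell^*$, and is negligible at smaller scales. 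Summing over $\ell$ gives $\sum_\ell \Pr_{\sigma_\ell}[\text{probe counted at scale }\ell] = O(1)$ via a geometric series, so $\mathbb{E}_\sigma\!\left[\,\sum_\ell \sum_Y \mathbb{E}_\O[C(X_Y^\ell, Y)]\,\right] \leq O(T)$. Combining this with the previous display yields $O(T) \geq \Omega(|\O|(\lg n/\lg \lg n)^2) - O(T/(\lg \lg n)^2)$, which rearranges to $T = \Omega(|\O| \cdot (\lg n/\lg \lg n)^2)$, as required.
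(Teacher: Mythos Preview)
Your argument is correct, but it takes a genuinely different route from the paper's. The paper follows Weinstein and Yu and builds a \emph{hierarchy}: a rooted binary tree on $\O$ in which every internal node $I$ is split as $\DEC(I)=(I_A,I_B)$ with $|I_B|=\lfloor|I|/\gamma\rfloor$. This nested structure has two consequences that drive their proof. First, because the pairs $(I_A,I_B)$ are sibling pairs in a tree, the least-common-ancestor argument gives $\sum_i\sum_{(I_A,I_B)\in\I_i} C(I_A,I_B)\leq \Gamma$ \emph{deterministically}: each cell-write is counted by exactly one pair. Second, the tree has $i_{\max}=\Theta(\gamma\lg n/\lg\lg n)$ levels (many more than $\lg n/\lg\lg n$, because the left child barely shrinks), each with total $I_B$-mass $\Theta(|\O|/\gamma)$; the proof then shows by Markov that fewer than $i_{\max}$ of these levels can be ``bad'' for conditions (1)--(3), forcing a good level and a contradiction.

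You instead use $\Theta(\lg n/\lg\lg n)$ flat geometric scales with \emph{overlapping} $X$-intervals and recover the non-overcounting only in expectation, via random shifts and the geometric-series bound on the number of scales at which a single (read, last-write) pair can be charged. What you gain is a more elementary decomposition: no hierarchy, no appeal to the Weinstein--Yu density lemma, and a transparent per-scale Markov bound on how many $Y$-blocks can violate conditions (1) or (2). What the paper gains is a deterministic charging inequality (no randomization needed) and a clean ``find one good level'' structure that matches the chronogram intuition. Both arguments give the same bound; yours trades the tree's combinatorial disjointness for a probabilistic averaging argument.
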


We note that in the upcoming proofs of our lower bounds it is crucial that a
well-behaved input sequence $\O$ additionally satisfies the following two properties:
(1) $\O$ interleaves updates and queries. This is necessary because, for
example, if we only had a single query at the end of $\O$, then an
\emph{amortized} data structure could just batch all updates together and solve
the static version of the problem. The static version of the problem might,
however, allow for faster algorithms than the dynamic version of the problem.
(2) The updates and queries in $\O$ are independent operations. This is crucial
for the chronogram argument and cell-sampling encoding proofs to go through.

\smallskip
\noindent {\bf Our Result on Dynamic Online Matrix Vector Multiplication (OMv).}

\smallskip
\noindent Consider an $n \times n$ matrix $M$ over a finite field $\F$ of size
$|\F| = n^{\Theta(1)}$. All the entries in this matrix are set to zero in the
beginning.  Subsequently, the data structure should be able to handle any
sequence of two types of {\em operations}:
\begin{itemize}
	\item {\sc Update} $(i,j,x) \in \{1,\dots,n\}^2 \times \F$: Set the entry $(i,j)$ of $M$, denoted as $M_{ij}$, to $x \in \F$.
	\item {\sc Query} $v \in \F^n$: Return  the matrix vector product $M \cdot v$.
\end{itemize}
\noindent Note that it requires $\Theta(\lg n)$ bits to specify an update, and
$\Theta(n \cdot \lg n)$ bits to specify the answer to a query. Since each cell
contains $w = \Theta(\lg n)$ bits, it is trivial to show a lower bound of
$\Omega(1)$ on the update time, and a lower bound of $\Omega(n)$ on the query
time.  Our main result on the dynamic OMv problem is summarized in
Theorem~\ref{Thm:DynamicOMv}. To prove Theorem~\ref{Thm:DynamicOMv}, we  adapt
the approach of Clifford et al.~\cite{clifford2015new} into our setting. See
Section~\ref{sec:omv} for the detailed proof.

\begin{theorem}
\label{Thm:DynamicOMv}
	For the dynamic OMv problem, there exists a well-behaved random input sequence $\O$ consisting of $n^2$ updates and $n$ queries.
\end{theorem}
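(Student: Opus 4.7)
The plan is to adapt the cell-sampling encoding argument of Clifford et al.~\cite{clifford2015new} for OMv to the averaged-interval setting of Definition~\ref{def:well:behave}. Construct $\O$ by placing $n$ uniformly random query vectors $v_t \in \F^n$ deterministically at every $n$-th position, and filling the remaining $n^2$ positions with independent updates $(i_t,j_t,x_t) \in [n]^2 \times \F$ drawn uniformly at random. Thus $|\O| = \Theta(n^2)$, and fixing $\kappa \in (1/2, 1)$ ensures that any $Y \subseteq \O$ with $|Y| \geq |\O|^\kappa = n^{2\kappa}$ deterministically contains $q_Y = \Theta(|Y|/n)$ queries; the interleaving is essential because without it an amortized data structure could batch all updates and then run a static algorithm on the whole sequence.

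Fix consecutive intervals $X, Y$ with $|X| \simeq \gamma|Y|$ and $|Y| \geq n^{2\kappa}$; assume conditions~(1) and~(2) hold and suppose for contradiction that $\Exp{C(X,Y)} < |Y|\lg n/(c'\lg\lg n)$ for $c' := 200(c+1002)$. Let $\mathcal{A}$ denote the set of cells counted by $C(X,Y)$, so $|\mathcal{A}| = C(X,Y)$. Treat each pair $(v_i, r) \in (\text{queries in } Y) \times [n]$ as a separate ``effective query'', giving $nq_Y = \Theta(|Y|)$ effective queries, each requiring the computation of one field element $(M^{(i)} v_i)_r$. Sample each cell of $\mathcal{A}$ independently at rate $p$ to obtain $\mathcal{S}$, and call an effective query ``covered'' if every cell from $\mathcal{A}$ whose content is needed to compute its output lies in $\mathcal{S}$. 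By Markov on per-effective-query cell counts (using the assumed bound) and Jensen on $x \mapsto p^x$, the expected number of covered effective queries is at least $\Omega(|Y| \cdot p^{O(\lg n/(c'\lg\lg n))})$. The decoder, given $\mathcal{S}$, a compressed bitmap of covered labels, and side information (the pre-$X$ memory state and the contents of all operations outside $X$, both treated as common knowledge via conditioning), can simulate the data structure through $Y$ and recover each covered output element. After subtracting the deterministic contributions from matrix entries not last written in $X$, this yields, per covered effective query, a linear equation over $\F$ on the still-live $X$-update values with coefficients drawn from the uniform $v_i$. A standard rank/Schwartz--Zippel argument then converts these equations into the recovery of $\Omega(|Y| \cdot p^{O(\lg n/(c'\lg\lg n))})$ independent random field elements of $X$, whose entropy must be accounted for by the encoding. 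Choosing $p$ so as to push the encoding length below this target entropy at $\tau := C(X,Y)/|Y| = \lg n/(c'\lg\lg n)$ delivers the contradiction and hence condition~(3).

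The main obstacle is the per-output-coordinate step of the rank argument: in the cell-probe model, cells probed during a query $v_i$ are shared across all $n$ output coordinates of $M^{(i)} v_i$, so attributing probes to individual effective queries requires the ``critical cells'' viewpoint from Larsen's dynamic-OMv analysis~\cite{larsen2012higher}---each coordinate's value is determined by a distinguished subset of the probed cells, and the cell-sampling must be applied to these subsets rather than to the full probe set of $v_i$. A second delicate point is controlling correlations between the set of covered effective queries, the sample $\mathcal{S}$, the random content of $X$, and the random query vectors: the plan is to first condition on the random update positions $(i_t, j_t)$ of $X$ and on $\mathcal{S}$, and then use the remaining independent randomness of the values $x_t$ and of the $v_i$'s to invoke Schwartz--Zippel on the constant fraction of $X$-updates whose positions are unique within $X \cup Y$ (the others are overwritten and contribute no information to queries in $Y$, so they can simply be discarded from the encoding target).
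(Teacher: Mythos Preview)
Your proposal has a genuine gap at the per-coordinate ``effective query'' step. In the cell-probe model, when the data structure answers a query $v_i$ it probes one set of cells and outputs the entire vector $M v_i$; nothing in the model attributes individual probed cells to individual output coordinates, and no such ``critical cells per coordinate'' notion appears in~\cite{larsen2012higher} (which treats range counting, not OMv) or in~\cite{clifford2015new}. If you take the natural definition---an effective query $(i,r)$ is covered iff all cells of $\mathcal{A}$ probed by query $v_i$ land in $\mathcal{S}$---then the $n$ effective queries of a single $v_i$ are perfectly correlated and you gain nothing over a per-query analysis. If you want a smaller per-coordinate cell set, you must define it and bound its expected size; the assumed bound on $C(X,Y)$ controls only the per-\emph{query} count, so Markov does not give you the per-effective-query bound you invoke. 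This breaks the claimed estimate $\Omega(|Y| \cdot p^{O(\lg n/(c'\lg\lg n))})$ for covered effective queries.

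The paper avoids this entirely: it fixes the update \emph{locations} deterministically to be ``well-spread'' (Lemma~2 of~\cite{clifford2015new}) rather than choosing them at random, and works at the query level. Cell-sampling inside $P(X)$ produces a small $\C^*$ that \emph{resolves} $k=(|X|-1)/n$ full query vectors; the well-spread property is then used to show (via the \emph{rank sum} $\sum_i \dim\spane(v_1^{|R_i},\dots,v_k^{|R_i})$) that these $k$ resolved queries yield $\Omega(nk)=\Omega(|X|)$ independent linear equations over the $u(X)$ unknown update values. Your random-position construction would have to re-derive a rank-sum lower bound without the well-spread structure, and your Schwartz--Zippel step faces the further difficulty that the set of covered queries depends on the data structure's behavior and hence on the very $v_i$'s whose randomness you want to exploit.
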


\begin{corollary}
	A cell-probe data structure for the dynamic OMv problem
	must have a total update and query time of $\Omega( n^2 (\lg n / \lg \lg n)^2)$
	over a sequence of $n^2$ updates and $n$ queries.
\end{corollary}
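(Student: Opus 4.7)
The plan is to exhibit a random input sequence $\O$ of size $n^2 + n$ and verify Definition~\ref{def:well:behave} directly. I would organize $\O$ into $n$ phases: phase $r \in \{1, \dots, n\}$ consists of $n$ consecutive updates followed by one query. The $k$-th update in phase $r$ sets $M_{\pi(r,k)}$ to a fresh uniformly random element of $\F$, where $\pi : [n] \times [n] \to [n]^2$ is a fixed bijection (e.g.\ row-major), and the query in phase $r$ asks for $M v_r$ with $v_r \in \F^n$ drawn uniformly at random. All random choices are independent, so $\O$ genuinely interleaves updates and queries as flagged in the remark following Theorem~\ref{Thm:GeneralFramework}, and each operation has the natural marginal distribution.

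To verify well-behavedness, I would fix consecutive intervals $X, Y \subseteq \O$ with $|X| \simeq \gamma |Y|$ and $|Y| \geq |\O|^\kappa$, assume that conditions~(1) and~(2) of Definition~\ref{def:well:behave} hold, and then lower-bound $\Exp{C(X,Y)}$ via a cell-sampling encoding argument that adapts Clifford et al.~\cite{clifford2015new} to the interleaved/amortized regime. The encoder and decoder share the data structure and the distribution of $\O$, together with the realized operations in $\O \setminus X$ as \emph{context}; the encoder's task is to transmit the $U_X \approx |X|$ random entries of $M$ that are set by updates in $X$, whose joint entropy is $U_X \lg |\F|$ bits. Given that $\Exp{|P(X)|} \leq |X| \lg^2 n$ by condition~(1), the encoder draws a random subsample $\C^* \subseteq P(X)$ of carefully chosen size $\sigma$ and transmits its addresses and contents in $O(\sigma \lg n)$ bits. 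The decoder then simulates the data structure on the operations preceding $X$, treats cells in $\C^*$ as known, and processes $Y$: a query $q \in Y$ is answered correctly precisely when every cell of $P(X)$ that $q$ probes lies in $\C^*$. From each correctly answered $q$ the decoder extracts $M_X v_q$ by subtracting the known context updates, and from a sufficient collection of such answers (together with $\C^*$) it recovers the projection of $M_X$ onto $\spane\{v_q : q \text{ successful}\}$, which by the independence of the $v_r$'s has near-maximal entropy.

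The quantitative crux is to balance $\sigma$, the per-light-query probe budget $\tau$ into $P(X)$, and the cell-sampling success probability $(\sigma / |P(X)|)^\tau$. Assuming for contradiction that condition~(3) holds, i.e.\ $\Exp{C(X,Y)} < \frac{1}{200(c+1002)} |Y| \lg n / \lg \lg n$, Markov's inequality retains a constant fraction of queries in $Y$ with $\tau = O(\lg n / \lg \lg n)$ probes into $P(X)$; choosing $\sigma = |P(X)| / q$ for a suitable $q = \lg^{\Theta(1)} n$ then makes enough queries successfully simulated that the decoder recovers more bits of information about $M_X$ than the total encoding length $O(\sigma \lg n) + (\text{failures}) \cdot n \lg n$, contradicting Shannon's source coding theorem. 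I expect the main obstacle to be handling the updates inside $Y$: these may probe cells in $P(X) \setminus \C^*$, causing the decoder's simulation to diverge. I would address this by augmenting $\C^*$ with the cells touched during $Y$'s updates (whose count is absorbed by condition~(2) without blowing up the encoding budget) and by shipping a short error-correction term listing any residual divergences, in the style of Weinstein and Yu~\cite{weinstein2016amortized}. A further subtlety is that in the regime $|X| \simeq \gamma |Y|$ one typically has $Q_Y \cdot n < U_X$, so the decoder cannot pin down all of $M_X$; the resolution is to compare the encoding length to the entropy of the projection $(M_X v_q)_{q \text{ successful}}$ alone, which still yields the desired contradiction once parameters are balanced as in Clifford et al.
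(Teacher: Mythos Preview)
Your proposal essentially re-proves Theorem~\ref{Thm:DynamicOMv} (well-behavedness of the OMv sequence) and then invokes Theorem~\ref{Thm:GeneralFramework}, which is also how the paper structures things. The high-level construction of $\O$ matches the paper. However, the encoding argument you outline has a genuine quantitative gap.

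You plan to use the \emph{actual} queries in $Y$ as the test vectors. But $Y$ contains only $q(Y) \approx |Y|/n \approx |X|/(\gamma n)$ queries, whereas $X$ writes $u(X) \approx |X|$ independent matrix entries. Hence the successful answers $(M_X v_q)_q$ carry at most $q(Y)\cdot n\cdot\lg|\F| \approx |Y|\lg|\F|$ bits, far short of the $|X|\lg|\F|$ bits needed to pin down $M_X$. You notice this and propose the fix of comparing the encoding length to the entropy of the \emph{projection} $(M_X v_q)_q$ rather than to $M_X$. This does not close the gap: to simulate the queries inside $Y$ you must ship (as you say) the cells touched by $Y$'s operations, which by condition~(2) is $\Theta(|Y|\lg^2 n)$ cells, i.e.\ $\Theta(|Y|\lg^3 n)$ bits. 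That cost already exceeds the projection's entropy $\approx |Y|\lg n$, so no contradiction results. In short, with the real queries in $Y$ the target entropy is too small to absorb the overhead that your own scheme incurs.

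The paper avoids this by a different mechanism: it fixes a \emph{single} query position $j$ in $Y$ and uses shared randomness to draw $m$ sets $\Gamma_1,\dots,\Gamma_m$, each of $k=(|X|-1)/n$ vectors in $\F^n$. The encoder finds $\C^*\subseteq P(X)$ resolving many such sets with high rank sum (Lemma~\ref{lem:CellSamplingDynamicOMv}), transmits the index $i^*$ of one good set plus $\C^*$ and $P(Y_j)$, and the decoder then evaluates $M_{Y_j}v$ for all $v\in\Gamma_{i^*}$ at position $j$. This yields $k n \approx |X|$ linear constraints, enough (together with the rank-sum bound) to recover $M_X$ itself; now the overhead $|P(Y_j)|\cdot 2w = O(|Y|\lg^3 n)$ is negligible against $u(X)\lg|\F| = \Theta(|X|\lg n)$ because $|X|\simeq\gamma|Y|$ with $\gamma=\lg^{2000}n$. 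A secondary point: the update locations must be a \emph{well-spread} sequence (Lemma~2 of~\cite{clifford2015new}), not merely row-major, for the rank-sum lower bound (Lemma~\ref{app:lm:low:rank:sum}) to go through.
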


\begin{proof}
The input sequence $\O$ as described in Theorem~\ref{Thm:DynamicOMv} consists of $n^2$ updates and $n$ queries. Since $\O$ is well-behaved,  the total time taken to process the inputs in $\O$ is $\Omega(n^2 \cdot (\lg n/\lg \lg n)^2)$ in the cell-probe model (see Theorem~\ref{Thm:GeneralFramework}). Hence, either $\Omega(n^2 \cdot (\lg n/\lg \lg n)^2)$ many cells are probed while processing the $n^2$ updates, or $\Omega(n^2 \cdot (\lg n/\lg \lg n)^2)$ many cells are probed while processing the $n$ queries.
\end{proof}

\smallskip
\noindent {\bf Our Result on Dynamic Polynomial Evaluation.}

\smallskip
\noindent
Here, the data structure gets a polynomial  of degree $n$ over a finite field
$\F$ of size $n^{1+\Omega(1)}$. It is specified as $(x-r_1) \cdot (x-r_2) \cdots
(x-r_n)$, where $r_i$ is the $i^{th}$ root. Subsequently, the data structure
should be able to handle  the following two types of {\em operations}:
\begin{itemize}
\item {\sc Update($i, z$):} Set the $i^{th}$ root to $z \in \F$, that is, set $r_i := z$.  
\item {\sc Query($x$):} Evaluate the value of the polynomial at $x \in \F$. 
\end{itemize}
\noindent Both an update and a query here can be specified using $\Theta(\lg n)$
bits. Since each cell contains $w = \Theta(\lg n)$ bits, it is trivial to show a
lower bound of $\Omega(1)$ on the update or the query time.  Our main result on
the dynamic polynomial evaluation problem is summarized in
Theorem~\ref{thm:main:poly}, which we prove in Section~\ref{app:Sec:PolynomialEvaluation}.

\begin{theorem}
\label{thm:main:poly}
	For the dynamic polynomial evaluation problem, there exists a well-behaved
	random input sequence $\O$ consisting of $n$ updates and $n$ queries. 
\end{theorem}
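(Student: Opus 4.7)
The plan is to construct $\O$ as a fixed regular interleaving of $n$ random updates with $n$ random queries: each update picks a uniformly random index $i \in \{1,\dots,n\}$ and a uniformly random new root $z \in \F$, and each query picks a uniformly random evaluation point $x \in \F$. Because $|\F| = n^{1+\Omega(1)}$, enough random evaluations of a degree-$n$ polynomial pin it down via Lagrange interpolation, so the updates in any sub-interval $X \subseteq \O$ retain essentially $2|X|\lg n$ bits of entropy conditional on every other input in $\O$.

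To establish the well-behaved property I fix arbitrary consecutive intervals $X, Y$ with $|X| \simeq \gamma \cdot |Y|$ and $|Y| \geq |\O|^{\kappa}$ and argue by contradiction, adapting the cell-sampling encoding of Larsen~\cite{larsen2012higher} and Clifford et al.~\cite{clifford2015new} to this interleaved setting. Suppose all three conditions of Definition~\ref{def:well:behave} hold. Condition (2) lets me uniformly sub-sample a set $\C^* \subseteq P(Y)$ of carefully controlled size; a Markov argument together with the upper bound on $C(X,Y)$ in condition (3) then shows that for a constant fraction of the queries in $Y$, \emph{every} cell the query reads that was last written during $X$ already lies in $\C^*$. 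For such queries the data structure can be simulated from $\C^*$ together with the cell contents outside $X$ (which the decoder reconstructs by replaying the inputs it knows), so their answers are recoverable.

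From this I build an encoding of the updates in $X$: list the addresses and contents of $\C^*$ (whose size is controlled by conditions (1) and (2)), the index set within $Y$ of reconstructible queries, and a small amount of auxiliary bookkeeping. Given this plus all inputs outside $X$, the decoder simulates $\O$, recovers enough query answers to interpolate the polynomial at the relevant points in time, and reads off the updates in $X$ using the known state just before $X$. A careful bit count shows this encoding uses strictly fewer than the $\approx 2|X|\lg n$ bits of entropy of the updates in $X$, the desired contradiction; hence at least one of the three conditions must fail.

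The main obstacle is the cell-sampling step itself. Unlike the classical chronogram, which averages over a single query at the very end of the sequence, here the queries in $Y$ are freely interleaved with further updates in $Y$, and a single small sample $\C^*$ must simultaneously expose the $C(X,Y)$-critical cells of enough queries to allow interpolation. The counter $C(X,Y)$ is tailored precisely for this: because it charges each cell only on its \emph{first} read in $Y$, it de-duplicates repeated probes whose naive counting would inflate the encoding cost. Balancing the sample size, the Markov/union-bound parameters, and the constants $\kappa,\beta,c$ from Definition~\ref{def:well:behave} so that the resulting encoding is both short and expressive enough to recover the updates in $X$ with a positive margin over the entropy bound is where the bulk of the technical effort will lie.
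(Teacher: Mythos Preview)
Your proposal has a genuine gap in the interpolation step. You plan to use the \emph{actual} queries appearing in $Y$ as evaluation points: after cell-sampling $\C^* \subseteq P(Y)$, you argue that a constant fraction of the $\Theta(|Y|)$ queries in $Y$ can be answered from $\C^*$, and then you want to interpolate. But $|Y| \simeq |X|/\gamma$, whereas the updates in $X$ involve $u(X) \approx |X|/2$ unknown roots. So you recover at most $O(|X|/\gamma)$ polynomial evaluations while you need $u(X)+1 = \Theta(|X|)$ of them to solve for the roots set during $X$. Information-theoretically the same mismatch appears: $\C^*$ together with $O(|Y|)$ recovered answers carries $O(|Y|\cdot\mathrm{polylog}\,n)$ bits, which is a $\gamma = \lg^{\beta} n$ factor short of the $\Theta(|X|\lg|\F|)$ bits of entropy you are trying to compress. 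The encoding is short, but it is not decodable.

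The paper avoids this by working per query position rather than per actual query. It fixes a single index $j \in \{1,\dots,q(Y)\}$, fixes everything in $Y_j$, and then ranges over \emph{all} $|\F|$ possible values $x$ for the $j^{\text{th}}$ query. The counter $C_j(X,Y)$ bounds, in expectation over this random $x$, the number of $P(X)$-cells the query reads; cell-sampling is then done inside $P(X)$ (not $P(Y)$), yielding a small $\C \subseteq P(X)$ that resolves at least $n+1$ query values from some subfamily $\F_{k^*} \subseteq \F$. From these one extracts $u(X)+1$ evaluation points avoiding the roots set outside $X$, enough to solve for the $u(X)$ unknown roots. Summing the resulting bound $\Exp{C_j(X,Y)} = \Omega(\lg n/\lg\lg n)$ over $j$ gives the required lower bound on $\Exp{C(X,Y)}$. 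Two further details to adjust: the paper takes the update indices to be deterministic (update $i$ sets root $i$), so only the root values are random and the entropy is $u(X)\lg|\F|$ rather than your $2|X|\lg n$; and the decoder also needs the addresses and contents of $P(Y_j)$, not just $\C$, to simulate the query at position $j$.
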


\begin{corollary}
\label{thm:cor:poly}
	A cell-probe data structure for the dynamic polynomial evaluation
	problem must have a total update and query time of $\Omega(n (\lg n/\lg \lg n)^2)$
	over a sequence of $n$ updates and $n$ queries.
\end{corollary}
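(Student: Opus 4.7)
The plan is to instantiate Definition~\ref{def:well:behave} by exhibiting an explicit random input sequence for polynomial evaluation and then verifying its well-behavedness via a cell-sampling encoding argument on every admissible pair of consecutive intervals $(X, Y)$. For the input sequence $\O$, I would interleave $n$ updates with $n$ queries in a fixed alternating pattern. The $i$-th update sets root $r_i$ to a uniformly random element of $\F$, cycling through the root indices $1, 2, \ldots, n$ without repetition across the $n$ updates. Each query evaluates the current polynomial at a uniformly random point $x_j \in \F$, chosen independently. Two features drive the argument: (a) distinct updates change distinct root indices, so values written in $X$ are never overwritten by later updates in $Y$, letting the decoder unambiguously attribute reconstructed roots back to $X$; and (b) the assumption $|\F| = n^{1+\Omega(1)}$ makes each update carry entropy $(1+\Omega(1)) \lg n$ bits, providing the slack that a cell-sampling encoding exploits against the $\Theta(\lg n)$ bits per memory cell.

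To verify well-behavedness, fix consecutive intervals $X, Y \subseteq \O$ with $|X| \simeq \gamma \cdot |Y|$ and $|Y| \geq |\O|^{\kappa}$, and assume toward contradiction that all three conditions of Definition~\ref{def:well:behave} hold in expectation. I would build a prefix-free encoding of the values written during $X$, conditioned on everything outside $X$, that is strictly shorter than $|X| \cdot \lg |\F|$ bits. The encoder samples each cell of $P(X)$ independently with probability $p := (\lg \lg n / \lg n)^{c_0}$ for a small constant $c_0$, obtaining a subsample $\C^* \subseteq P(X)$ whose addresses and contents it writes down; by condition~(1), the expected length of this portion of the encoding is $O(p \cdot |X| \cdot \lg^3 n)$. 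The decoder knows the data-structure state at the start of $X$ (from the fixed inputs outside $X$) and uses $\C^*$ to simulate the inputs of $Y$: a query is called \emph{recoverable} if every fresh probe it makes into $P(X)$ lands in $\C^*$, in which case the decoder reproduces the query's answer exactly. Condition~(3) together with a Markov-style averaging argument guarantees that a constant fraction of the queries in $Y$ make only $O(\lg n / \lg \lg n)$ fresh $P(X)$-probes, so each such query is recoverable with probability at least $p^{O(\lg n / \lg \lg n)}$. The decoder then divides each recovered evaluation by the contribution of the $n - |X|$ known roots and interpolates the residual degree-$|X|$ polynomial $\prod_{i \in X}(x - r_i)$; any roots not determined by the recovered queries are listed explicitly in the encoding. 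Optimizing $c_0$ to balance the bits spent on $\C^*$, the identifiers for recovered queries, and the directly encoded roots makes the total expected encoding length fall strictly below $|X| \cdot \lg |\F|$, yielding the contradiction.

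The main obstacle is the triple balance implicit in the choice of $p$: it must be small enough that $|\C^*| \cdot \lg n$ together with any direct root-encoding does not exceed $|X| \cdot \lg |\F|$, yet large enough that enough queries in $Y$ are recoverable to avoid encoding too many roots directly. This is precisely why the constant $\tfrac{1}{200(c+1002)}$ is hard-coded into condition~(3): a tighter bound on the average number of fresh $P(X)$-probes per query simultaneously sharpens the Markov step and allows a larger $p$. A secondary subtlety is that the decoder must also execute the updates in $Y$, which may themselves probe cells of $P(X) \setminus \C^*$; this is handled by including in the encoding the few $P(X)$-cells first read in $Y$ during an update (a quantity disjoint from $C(X,Y)$), whose cost is absorbed via condition~(2) together with the distinct-index construction that decouples the semantics of each $Y$-update from any stale $X$-cell. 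Beyond these points, the argument is a faithful amortized lift of the worst-case cell-sampling proofs of Larsen~\cite{larsen2012higher} and Clifford et al.~\cite{clifford2015new}, with the averaged counter $C(X,Y)$ playing the role of the per-query probe count.
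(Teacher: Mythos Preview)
Your construction of $\O$ and the overall encoding strategy are on the right track, but the argument has a quantitative gap that cannot be closed as written. You propose to recover evaluations of the polynomial only at the query points that actually occur in $Y$. There are at most $q(Y) \approx |Y|/2 \approx |X|/(2\gamma)$ such points, whereas the residual polynomial $\prod_{i\in X}(x-r_i)$ has degree $u(X) \approx |X|/2$. Even if \emph{every} query in $Y$ were recoverable, you would be short of the $u(X)+1$ evaluations needed for interpolation by a factor of $\gamma = \lg^{\beta} n$; the savings from those evaluations are at most $q(Y)\lg|\F| = O(|X|\lg n/\gamma)$ bits, which is dominated by the $\Theta(p\,|X|\lg^3 n)$ bits spent on $\C^*$ unless $p \ll \lg^{-(\beta+2)} n$. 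But with $p$ that small, each query with $t = \Theta(\lg n/\lg\lg n)$ fresh probes is recoverable only with probability $p^{t} = n^{-\Omega(1)}$, so the expected number of recoverable queries among the $q(Y)\le n$ available is $o(1)$. The balancing you describe therefore cannot be achieved.

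The paper avoids this by \emph{not} using the queries that appear in $Y$. Instead it fixes a single query position $j$ in $Y$, encodes all of $P(Y_j)$ (which costs $O(|Y|\lg^3 n)=o(|X|)$ bits by condition~(2) and the choice $\gamma=\lg^{2000}n$), and then ranges over \emph{all} possible query values $x\in\F$ at that one position. Because $|\F|=n^{1+\Omega(1)}$, cell-sampling from $P(X)$ yields a small set $\C$ that simultaneously resolves $\Omega(|\F|^{3/4})\gg n$ distinct query values at slot $j$; picking $u(X)+1$ of them that avoid the roots set outside $X$ gives enough evaluations to interpolate. The encoder communicates \emph{which} $u(X)+1$ values to use by restricting attention to a subinterval $\F_{k^*}\subset\F$ of size $|\F|^{3/4}$ (so that specifying each value costs only $\tfrac34\lg|\F|$ bits rather than $\lg|\F|$), and this $\tfrac14\lg|\F|$-per-root savings is exactly where the $|\F|=n^{1+\Omega(1)}$ assumption is cashed in. Your proposal is missing this ``range over $\F$ at a fixed slot'' idea, which is what converts the tiny per-query recovery probability into a large absolute count of resolvable evaluation points.
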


\begin{proof}(Sketch)
Follows from Theorem~\ref{Thm:GeneralFramework} and Theorem~\ref{thm:main:poly}.
\end{proof}

\iffalse
\begin{proof}
The input sequence $\O$ as described in Theorem~\ref{thm:main:poly} consists of $n$ updates and $n$ queries. Since $\O$ is well-behaved,  the total time taken to process the inputs in $\O$ is $\Omega(n \cdot (\lg n/\lg \lg n)^2)$ in the cell-probe model (see Theorem~\ref{Thm:GeneralFramework}). Hence, either $\Omega(n \cdot (\lg n/\lg \lg n)^2)$ many cells are probed while processing the $n$ updates, or $\Omega(n \cdot (\lg n/\lg \lg n)^2)$ many cells are probed while processing the $n$ queries. In the former event the amortized update time is $\Omega((\lg n/\lg \lg n)^2)$, whereas in the latter event the amortized query time is $\Omega((\lg n/\lg \lg n)^2)$. 
\end{proof}
\fi

\section{Proof of Theorem~\ref{Thm:GeneralFramework}}
\label{sec:Thm:GeneralFramework}

For every interval $I \subseteq \mathcal{O}$
we define  $\DEC(I) := (I_A, I_B)$, where $I_B \subseteq I$ are the last $\lfloor |I| / \gamma \rfloor$ inputs in the interval $I$, 
and $I_A$ are  the first $|I| -\lfloor |I| / \gamma \rfloor$ inputs  in the interval $I$. Note that $I_A$ and $I_B$ partition $I$, and we have  $|I_A| \simeq \gamma \cdot |I_B|$.

Following the framework of Weinstein and Yu~\cite{weinstein2016amortized} we  construct a ``hierarchy'' $\I$, which can be thought of as a rooted binary tree that is  built on top of the input sequence $\O$. Every  node in this tree corresponds to an interval $I \subseteq \O$, and the root corresponds to the entire sequence $\O$. Every non-leaf node $I \subseteq \O$ has two children $I_A$ and $I_B$ such that $\DEC(I) = (I_A, I_B)$. We ensure that every node $I \subseteq \O$ in this tree has size $|I| \geq |\O|^{\kappa}$. In other words, as we move down a path from the root, the intervals corresponding to the nodes on this path keep getting smaller and smaller in size. Consider the first (closest to the root) node $I' \subseteq \O$ on this path whose size is less than $\gamma \cdot |\O|^{\kappa}$. If the node $I'$ had two children $I'_A$ and $I'_B$ such that $\DEC(I') = (I'_A, I'_B)$, then the size of $I'_B$ would be less than $|\O|^{\kappa}$. In order to rule out this possibility, such a node $I'$ becomes a leaf in the tree. 

An interesting corollary of this construction is as follows. Consider any two nodes $I_A, I_B \subseteq \O$ in this tree that are ``siblings'' of each other, meaning that they share the same parent node $\I \subseteq \O$ and $\DEC(I) = (I_A, I_B)$. Then the two intervals $I_A, I_B$ are consecutive, $|I_A| \simeq \gamma \cdot |I_B|$, and $|I_B| \geq |\O|^{\kappa}$. Hence, the property of a well-behaved input sequence $\O$ as stated in Definition~\ref{def:well:behave} will apply to these two intervals $I_A$ and $I_B$.

By convention,  the root of this tree is at level $0$, and the level of a child node is  one more than that of its parent. With this convention in mind, for every integer $i \geq 1$ let $\I_i$  be the collection of ordered pairs of siblings $(I_A, I_B)$ that constitute the level $i$ of the hierarchy $\I$.
We further  define $i_{\max} := 0.1 \gamma \lg |\O| / \lg \gamma$. Lemma~\ref{lem:sizeOfIBs} lower bounds the total size of the $I_B$ intervals at any level $i \leq i_{max}$ of this hierarchy. Theorem~\ref{Thm:GeneralFramework} follows from Lemma~\ref{cl:hierarchyLevel} and~\ref{cl:noEfficientDatastructure}.

\begin{lemma}[\cite{weinstein2016amortized}, Claim~4]
\label{lem:sizeOfIBs}
  For each level $i \leq i_{\max}$, we have $
	\sum_{(I_A, I_B) \in \mathcal{I}_i} |I_B| \geq |\O|/(2\gamma)$.
\end{lemma}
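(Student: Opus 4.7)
The plan is to reduce the lemma to a lower bound on $M_{i-1}$, the total mass of non-leaf nodes at level $i-1$ of the hierarchy, and then to show via a concentration argument that $M_{i-1}$ is close to $|\O|$ whenever $i \leq i_{\max}$.

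First, I would observe that each pair $(I_A, I_B) \in \I_i$ corresponds to a unique non-leaf parent $I$ at level $i-1$, with $|I_B| = \lfloor |I|/\gamma \rfloor \geq |I|/\gamma - 1$. Summing gives
\[
\sum_{(I_A, I_B) \in \I_i} |I_B| \geq \frac{M_{i-1}}{\gamma} - N_{i-1},
\]
where $N_{i-1}$ is the number of non-leaves at level $i-1$. Because non-leaves are disjoint sub-intervals of $\O$ of size at least $\gamma |\O|^{\kappa}$, we have $N_{i-1} \leq |\O|^{1-\kappa}/\gamma = o(|\O|/\gamma)$, so it would suffice to prove $M_{i-1} \geq (1-o(1)) |\O|$.

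I would then use mass conservation: since the two children of any non-leaf parent have sizes summing to the parent's, $M_{i-1} = |\O| - \sum_{k=1}^{i-1} L_k$, where $L_k$ denotes the total mass of leaves newly appearing at level $k$. So I need to show $\sum_k L_k = o(|\O|)$. Each leaf at depth $k$ corresponds to a unique root-to-leaf path with some number $b$ of $B$-steps and $k-b$ $A$-steps; its size is approximately $|\O|(1-1/\gamma)^{k-b} \gamma^{-b}$, and the leaf condition $\text{size} < \gamma |\O|^{\kappa}$ forces $b \geq b^{\star} := \lceil (1-\kappa) \log_{\gamma} |\O| \rceil$ up to lower-order corrections. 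Using the identity $\binom{k}{b}(1-1/\gamma)^{k-b}\gamma^{-b} = \Pr[X_k = b]$ for $X_k \sim \operatorname{Bin}(k, 1/\gamma)$, this would give $L_k \leq |\O| \cdot \Pr[X_k \geq b^{\star}]$. For $k \leq i-1 \leq i_{\max}$ the mean $\mathbb{E}[X_k] = k/\gamma \leq 0.1 \log_{\gamma} |\O|$ is at least a factor $10(1-\kappa)$ below $b^{\star}$, so a Chernoff bound yields $\Pr[X_k \geq b^{\star}] \leq |\O|^{-\Omega(1/\log \gamma)}$ (assuming $\kappa$ is bounded away from $1$). Summing over $k \leq i_{\max}$ and multiplying by $|\O|$ gives $\sum_k L_k \leq i_{\max} \cdot |\O|^{1 - \Omega(1/\log \gamma)} = o(|\O|)$ for large $n$, since $\log |\O| / \log \gamma \to \infty$ when $|\O|$ is polynomial in $n$ and $\gamma = \polylog n$.

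Combining the two steps, $\sum_{(I_A, I_B) \in \I_i} |I_B| \geq (1 - o(1)) |\O|/\gamma - |\O|^{1-\kappa}/\gamma \geq |\O|/(2\gamma)$ for $n$ sufficiently large. The main obstacle will be the Chernoff estimate: one must carefully handle the rounding in $\lfloor \cdot / \gamma \rfloor$ (so that the ``virtual'' sizes used in the binomial identity match the actual sizes in the hierarchy up to lower-order error) and verify that the exponent $\Omega(1/\log \gamma)$ is large enough to absorb the $i_{\max}$ factor in the sum over depths. The definition $i_{\max} = 0.1 \gamma \log |\O| / \log \gamma$ is calibrated precisely so that $\mathbb{E}[X_k]$ stays a constant fraction below $b^{\star}$, which is exactly what makes the Chernoff bound go through.
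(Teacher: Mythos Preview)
The paper does not give its own proof of this lemma; it simply cites Claim~4 of Weinstein and Yu~\cite{weinstein2016amortized}. So there is nothing in the paper to compare against beyond the citation, and your task is really to supply a proof from scratch.

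Your approach is sound and essentially complete. The reduction to bounding the cumulative leaf mass $\sum_{k \leq i-1} L_k$ is the right move, and the probabilistic reformulation via $\Pr[X_k \geq b^\star]$ with $X_k \sim \operatorname{Bin}(k,1/\gamma)$ is exactly what makes the Chernoff bound available. A slightly cleaner way to organize the same idea, which also sidesteps the floor-rounding issue you flag, is to observe that $L_k/|\O|$ is \emph{exactly} the probability that a uniformly random element $x \in \O$ lies in a leaf at depth $k$; hence $\sum_{k \leq i-1} L_k/|\O| = \Pr_x[\text{a leaf occurs on the path to }x\text{ at depth }\leq i-1]$. Along this random path each step is a $B$-step with probability $\lfloor s/\gamma\rfloor/s \leq 1/\gamma$, so the number of $B$-steps in the first $i-1$ levels is stochastically dominated by $\operatorname{Bin}(i-1,1/\gamma)$, and the leaf condition forces at least $b^\star \approx (1-\kappa)\log_\gamma|\O|$ such steps. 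This avoids having to match virtual and actual node sizes and removes the extra $i_{\max}$ factor from the union bound over depths, though as you note that factor is harmless anyway. Your verification that $i_{\max} = 0.1\gamma\log_\gamma|\O|$ keeps $\mathbb{E}[X_k]$ a constant factor below $b^\star$ (for $\kappa$ bounded away from $1$, as in both applications here) is the crux, and it is correct.
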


\begin{lemma}
\label{cl:hierarchyLevel}
	For every  level $i \leq i_{\max}$, at least one of these  three conditions is violated:
	\begin{enumerate}
		\item $\sum_{(I_A,I_B) \in \mathcal{I}_i} \Exp{ |P(I_A)| } < \frac{|\O| \cdot \lg^2 n}{8}$.
		\item $\sum_{(I_A,I_B) \in \mathcal{I}_i} \Exp{ |P(I_B)| } < \frac{|\O| \cdot \lg^2 n}{8\gamma}$.
		\item $\sum_{(I_A,I_B) \in \mathcal{I}_i} \Exp{ C(I_A,I_B) } <
		\frac{1}{1200 (c + 1002)} \cdot \frac{|\O|}{\gamma} \cdot \frac{\lg n}{\lg \lg n}$.
	\end{enumerate}
\end{lemma}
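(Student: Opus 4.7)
The plan is to prove Lemma~\ref{cl:hierarchyLevel} by contradiction: assume all three aggregate conditions (1), (2), (3) hold at some fixed level $i \leq i_{\max}$. By the construction of the hierarchy $\I$, every sibling pair $(I_A, I_B) \in \I_i$ satisfies $|I_A| \simeq \gamma |I_B|$ and $|I_B| \geq |\O|^{\kappa}$, so the well-behaved property of $\O$ from Definition~\ref{def:well:behave} applies pointwise to $(X,Y) = (I_A, I_B)$. Therefore for each such pair at least one of the three conditions in Definition~\ref{def:well:behave} must be violated, and I would partition $\I_i = S_1 \sqcup S_2 \sqcup S_3$ by placing a pair into $S_j$ if condition $(j)$ is the one violated for it (breaking ties arbitrarily).

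The key step is a pigeonhole argument combined with Lemma~\ref{lem:sizeOfIBs}. Since $\sum_{\I_i} |I_B| \geq |\O|/(2\gamma)$ and $\I_i = S_1 \cup S_2 \cup S_3$, there exists $j^\star \in \{1,2,3\}$ with $\sum_{(I_A,I_B) \in S_{j^\star}} |I_B| \geq |\O|/(6\gamma)$. I would then aggregate the pointwise violated inequality over the pairs in $S_{j^\star}$ to contradict the corresponding aggregate condition. If $j^\star = 1$, the relation $|I_A| \simeq \gamma |I_B|$ yields $\sum_{S_1} |I_A| \gtrsim |\O|/6$ and hence $\sum_{\I_i} \Exp{|P(I_A)|} \geq \sum_{S_1} |I_A| \lg^2 n > |\O|\lg^2 n / 8$, contradicting (1). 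If $j^\star = 2$, directly $\sum_{\I_i} \Exp{|P(I_B)|} \geq \sum_{S_2} |I_B| \lg^2 n \geq |\O|\lg^2 n / (6\gamma) > |\O|\lg^2 n / (8\gamma)$, contradicting (2). If $j^\star = 3$, summing the $C(I_A,I_B)$ lower bound gives $\sum_{\I_i} \Exp{C(I_A,I_B)} \geq \frac{\sum_{S_3} |I_B|}{200(c+1002)} \cdot \frac{\lg n}{\lg\lg n} \geq \frac{|\O|}{1200(c+1002)\gamma} \cdot \frac{\lg n}{\lg\lg n}$, contradicting (3).

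The main obstacle I expect is bookkeeping the constants tightly so that they line up with the exact thresholds in (1)--(3). In particular, one must verify that the floor in $|I_B| = \lfloor |I|/\gamma \rfloor$ costs only a lower-order factor when converting $\sum_{S_1}|I_B|$ into $\sum_{S_1}|I_A|$; this is fine because $\gamma = \lg^\beta n$ tends to infinity, so $(\gamma - 1)/\gamma = 1 - o(1)$, and together with $(1-o(1))/6 > 1/8$ this closes the gap in case $j^\star = 1$. The identity $1/(200 \cdot 6) = 1/1200$ is precisely what dictates the constant $1/1200$ appearing in condition (3), confirming that the thresholds in the lemma are designed so that the pigeonhole argument goes through without any slack wasted.
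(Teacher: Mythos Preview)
Your proposal is correct and follows essentially the same argument as the paper's proof: define, for each pair $(I_A,I_B)\in\I_i$, which condition of Definition~\ref{def:well:behave} it violates, use Lemma~\ref{lem:sizeOfIBs} together with pigeonhole to find a $j^\star$ whose violating pairs have total $|I_B|$-length at least $|\O|/(6\gamma)$, and then sum the pointwise violated inequality to contradict the matching aggregate condition. The only cosmetic differences are that the paper uses (possibly overlapping) sets $F_j$ rather than a disjoint partition $S_j$, and presents the case analysis directly rather than wrapped in a contradiction; neither affects the argument, and your bookkeeping of the constants (including the $(1-o(1))$ loss from the floor in $|I_B|$ being absorbed by the gap $1/6>1/8$) is exactly right.
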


\begin{proof}
The input sequence $\O$ is well-behaved as per Definition~\ref{def:well:behave}. 
For $j \in \{1,2,3\}$, define
\begin{align*}
F_j := \{ (I_A,I_B) \in \mathcal{I}_i :
			(I_A,I_B) \text{ violates condition $j$ of Definition~\ref{def:communicationLowerBound}}\}.
\end{align*}
We also define $\length(F_j) := \sum_{(I_A,I_B) \in F_j} |I_B|$. As per
Definition~\ref{def:well:behave}, each ordered pair $(I_A, I_B) \in \I_i$
belongs to at least one of the $F_j$'s. Furthermore, by Lemma~\ref{lem:sizeOfIBs}
we have $\sum_{(I_A,I_B) \in \mathcal{I}_i} |I_B| \geq \frac{|\O|}{2\gamma}$.
Thus, we infer that $\sum_{j = 1}^3 \length(F_j) \geq \frac{|\O|}{2\gamma}$, and
hence there must be some $j \in \{1, 2, 3\}$ for which
$\length(F_j) \geq \frac{|\O|}{6 \gamma}$.
We now fork into three cases, and show that in each
case one of the three conditions stated in
Lemma~\ref{cl:hierarchyLevel} gets violated. 

\smallskip
\noindent {\bf Case 1: $\length(F_1) \geq \frac{|\O|}{6\gamma}$.}
In this case, we can derive that:
\begin{align*}
	\sum_{(I_A,I_B) \in \mathcal{I}_i} \Exp{ |P(I_A)| }
	&\geq \sum_{(I_A,I_B) \in F_1} \Exp{ |P(I_A)| } \\
	&\geq \sum_{(I_A,I_B) \in F_1} |I_A| \cdot \lg^2 n  \\
	&\geq \sum_{(I_A,I_B) \in F_1} |I_B| \cdot \gamma \cdot \lg^2 n \\
	&= \length(F_1) \cdot \gamma \cdot \log^2 n \geq \frac{|\O| \cdot \lg^2 n}{6}.
\end{align*}
The second inequality follows from the definition of the set $F_1$.
The third inequality holds since $|I_A| \simeq |I_B| \cdot \gamma$.
Hence, the first condition in Lemma~\ref{cl:hierarchyLevel} is violated. 

\smallskip
\noindent {\bf Case 2: $\length(F_2) \geq \frac{|\O|}{6\gamma}$.}
In this case, we can derive that:
\begin{align*}
	\sum_{(I_A,I_B) \in \mathcal{I}_i} \Exp{ |P(I_B)| }
	&\geq \sum_{(I_A,I_B) \in F_2} \Exp{ |P(I_B)| } \\
	&\geq \sum_{(I_A,I_B) \in F_2} |I_B| \cdot \lg^2 n \\
	&\geq \length(F_2) \cdot \log^2 n \\
	&\geq \frac{|\O| \cdot \lg^2 n}{6 \gamma}.
\end{align*}
The second inequality follows from the definition of the set $F_2$. We conclude that in this case the second condition in Lemma~\ref{cl:hierarchyLevel} is violated.  

\smallskip
\noindent {\bf Case 3: $\length(F_3) \geq \frac{|\O|}{6\gamma}$.}
In this case, we can derive that:
\begin{align*}
	\sum_{(I_A,I_B) \in \mathcal{I}_i} \Exp{ C(I_A, I_B) }
	&\geq \sum_{(I_A,I_B) \in F_3} \Exp{ C(I_A, I_B) } \\
	&\geq \sum_{(I_A,I_B) \in F_3} \frac{1}{200 (c+1002)} \cdot |I_B| \cdot \frac{\lg n}{\lg \lg n}  \\
	&\geq \frac{1}{1200 (c+1002)} \cdot \frac{|\O|}{\gamma} \cdot \frac{\lg n}{\lg \lg n}.
\end{align*}
The inequalities follow from the definition of $F_3$.  The computation shows
that the third condition in Lemma~\ref{cl:hierarchyLevel} is violated. 
\end{proof}

\begin{lemma}
\label{cl:noEfficientDatastructure}
  Suppose that a data structure probes $o(|\O| \cdot (\lg n / \lg \lg n)^2)$ cells in expectation while processing the updates and queries in $\O$. Then there exists a level $i \leq i_{\max}$ such that all of the following conditions are satisfied:
	\begin{enumerate}
		\item $\sum_{(I_A,I_B) \in \mathcal{I}_i} \Exp{ |P(I_A)| } < \frac{|\O| \cdot \lg^2 n}{8}$.
		\item $\sum_{(I_A,I_B) \in \mathcal{I}_i} \Exp{ |P(I_B)| } < \frac{|\O| \cdot \lg^2 n}{8\gamma}$.
		\item $\sum_{(I_A,I_B) \in \mathcal{I}_i} \Exp{ C(I_A, I_B) } < 
			\frac{1}{1200 (c + 1002)} \cdot \frac{|\O|}{\gamma} \cdot \frac{\lg n}{\lg \lg n}$.
	\end{enumerate}
\end{lemma}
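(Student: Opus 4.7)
The plan is to argue by contrapositive. Assume that for every level $i \leq i_{\max}$ at least one of the three displayed conditions fails, and write $T$ for the expected total number of cell probes while processing $\O$. I will derive $T = \Omega(|\O| (\lg n/\lg\lg n)^2)$, contradicting the hypothesis.

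First observe that at any fixed level $i$ the intervals $\{I_A : (I_A,I_B) \in \I_i\}$ are pairwise disjoint sub-intervals of $\O$, and likewise the $\{I_B\}$, so $\sum_{(I_A,I_B) \in \I_i} \Exp{|P(I_A)|} \leq T$ and $\sum_{(I_A,I_B) \in \I_i} \Exp{|P(I_B)|} \leq T$. If condition~(1) failed at even a single level this would already force $T \geq |\O|\lg^2 n /8$, which dwarfs the assumed $o(|\O|(\lg n/\lg\lg n)^2)$. Hence condition~(1) holds at every level, and so at every level condition~(2) or condition~(3) fails. Let $L_2$ and $L_3$ be the corresponding sets; since $L_2 \cup L_3 = \{1,\ldots,i_{\max}\}$, one of them has size at least $i_{\max}/2$.

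To convert either case into a lower bound on $T$ I need two sum-over-levels inequalities. Bound~(i) is $\sum_{i=1}^{i_{\max}} \sum_{(I_A,I_B) \in \I_i} \Exp{|P(I_B)|} \leq b_{\max} \cdot T$, where $b_{\max} = O(\lg n/\lg\lg n)$. I would prove this by charging each probe during input $x$ to those levels at which $x$ lies in the $B$-child; tracking the interval size along $x$'s root-to-leaf path (each $B$-turn shrinks the size by roughly $\gamma$, each $A$-turn only by a factor $1+1/\gamma$, and every node has size $\geq |\O|^\kappa$) limits the number of $B$-turns on any such path to $O(\lg|\O|/\lg\gamma) = O(\lg n/\lg\lg n)$.

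Bound~(ii), which I expect to be the main obstacle, asserts $\sum_{i=1}^{i_{\max}} \sum_{(I_A,I_B) \in \I_i} \Exp{C(I_A,I_B)} \leq T$, \emph{with no logarithmic loss}. The crucial geometric fact is that each query/cell pair $(q,c)$ contributes to $C(I_A^i,I_B^i)$ at most at one level. Indeed, if it contributed at levels $i_1 < i_2$, then $q$ is a descendant of both $I_B^{i_1}$ and $I_B^{i_2}$, and the nesting of intervals along $q$'s path together with $I^{i_1} = I_B^{i_1}$ forces $I_A^{i_2} \subseteq I_B^{i_1}$. But then the probe of $c$ that witnesses $c \in P(I_A^{i_2})$ sits inside $I_B^{i_1}$ strictly before $q$, violating the ``first probe of $c$ in $I_B^{i_1}$'' clause required at level $i_1$. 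Summing the at-most-one contribution over all query-probes gives bound~(ii).

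Finally, substituting $i_{\max} = \Theta(\gamma \lg n/\lg\lg n)$ and $b_{\max} = \Theta(\lg n/\lg\lg n)$: if $|L_2| \geq i_{\max}/2$, combining condition~(2)'s per-level lower bound with~(i) yields $T \geq \Omega(|\O| \lg^2 n)$, and if $|L_3| \geq i_{\max}/2$, combining condition~(3)'s per-level lower bound with~(ii) yields $T \geq \Omega(|\O|(\lg n/\lg\lg n)^2)$. Each contradicts the hypothesis, so some level satisfying all three conditions must exist. Without the sharper ``at most one level'' argument behind~(ii), condition~(3) would only yield $T = \Omega(|\O|\lg n/\lg\lg n)$, a full $\lg n/\lg\lg n$ factor short, which is exactly why that disjointness step is the crux.
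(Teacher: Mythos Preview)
Your proposal is correct and follows essentially the same approach as the paper: rule out failure of condition~(1) at any level by a trivial per-level bound, then show that at most $i_{\max}/2$ levels can fail condition~(2) via the $O(\log_\gamma|\O|)$ bound on $B$-turns along any root-to-leaf path, and at most $i_{\max}/2$ can fail condition~(3) via the ``each probe contributes at most one level'' inequality $\sum_i\sum C(I_A,I_B)\le T$. The only cosmetic difference is that you charge bound~(ii) to (query, cell-read) pairs while the paper charges it to cell-\emph{writes} (tracking the first subsequent query-read of each write and locating the contribution at the LCA); both charging schemes yield the same inequality.
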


We devote the rest of this section to the proof of Lemma~\ref{cl:noEfficientDatastructure}.
For $j \in \{1, 2, 3\}$,  say that a level $i \in \{1, \ldots, i_{max}\}$ in the hierarchy $\I$ is of ``type $j$''  iff it violates condition $j$ in Lemma~\ref{cl:noEfficientDatastructure}. In Claim~\ref{cl:new:1}, we show that there is no level of type $1$ in $\{1, \ldots, i_{max}\}$.  Claims~\ref{cl:new:2},~\ref{cl:new:3} state that for each $j \in \{2, 3\}$, the number of levels in $\{1, \ldots, i_{max}\}$ that are of type $j$ is  less than $i_{max}/2$. Hence, there exists some level $i \in \{1, \ldots, i_{max}\}$ that is not of type $1, 2$ or $3$. By definition, such a level $i$ satisfies all the three conditions  in Lemma~\ref{cl:noEfficientDatastructure}.

\begin{claim}
\label{cl:new:1}
There is no level $i \in \{1, \ldots, i_{max}\}$ that is of type $1$. 
\end{claim}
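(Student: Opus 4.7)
The plan is a straightforward double-counting argument combined with an asymptotic comparison. The key structural observation is that at any level $i$ of the hierarchy $\I$, the intervals $\{I_A : (I_A, I_B) \in \I_i\}$ are pairwise disjoint subintervals of $\O$. This is immediate from the construction of $\I$: each sibling pair $(I_A, I_B)$ splits its parent node into two consecutive pieces via $\DEC(\cdot)$, and distinct parents at level $i-1$ already correspond to disjoint subintervals of $\O$. In particular, every single cell-probe performed by the data structure while processing $\O$ lies inside the time-span of at most one such $I_A$, and so contributes to $|P(I_A)|$ for at most one pair $(I_A, I_B) \in \I_i$.

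Let $T$ denote the total (random) number of cell-probes performed while processing $\O$. The disjointness observation above yields the pointwise bound
\begin{equation*}
\sum_{(I_A, I_B) \in \I_i} |P(I_A)| \;\leq\; T.
\end{equation*}
Taking expectations and invoking the standing hypothesis of Lemma~\ref{cl:noEfficientDatastructure}, namely $\Exp{T} = o(|\O| \cdot (\lg n / \lg \lg n)^2)$, I obtain
\begin{equation*}
\sum_{(I_A, I_B) \in \I_i} \Exp{|P(I_A)|} \;\leq\; \Exp{T} \;=\; o\!\left(|\O| \cdot (\lg n / \lg \lg n)^2\right).
\end{equation*}

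To conclude, I would compare this upper bound against the threshold $|\O| \cdot \lg^2 n / 8$ appearing in condition~1 of Lemma~\ref{cl:noEfficientDatastructure}. Since $\lg^2 n / (\lg n / \lg \lg n)^2 = (\lg \lg n)^2 \to \infty$, the threshold $|\O| \cdot \lg^2 n / 8$ grows strictly faster than any $o(|\O| \cdot (\lg n / \lg \lg n)^2)$ quantity. Hence for all sufficiently large $n$ and every level $i \in \{1, \ldots, i_{\max}\}$, the first inequality of Lemma~\ref{cl:noEfficientDatastructure} is satisfied, i.e.\ no level is of type~$1$. The proof presents no real obstacle; the only point that deserves care is the disjointness of the $I_A$'s at a fixed level, which is an immediate consequence of how $\I$ is built and requires no further work.
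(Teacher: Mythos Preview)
Your proof is correct and essentially identical to the paper's. Both arguments rest on the same observation that the intervals $I_A$ at a fixed level are pairwise disjoint, so $\sum_{(I_A,I_B)\in\I_i}|P(I_A)|$ is at most the total number of cell-probes; the paper phrases this as a contradiction (a type-$1$ level would force $\Omega(|\O|\lg^2 n)$ expected probes), whereas you argue directly that the sum is $o(|\O|(\lg n/\lg\lg n)^2)$ and hence below the threshold $|\O|\lg^2 n/8$.
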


\begin{proof}
Let there be a level $i \leq i_{max}$  of type $1$, then
$\sum_{(I_A,I_B) \in \mathcal{I}_i} \Exp{ |P(I_A)| } \geq \frac{|\O| \cdot \lg^2 n}{8}$.
The sum $\sum_{(I_A,I_B) \in \mathcal{I}_i} |P(I_A)|$ is a lower bound on the total number of cells probed by the data structure while processing the updates and queries in $\O$. Hence, this inequality implies that the data structure probes at least $\Omega(|\O| \cdot \log^2 n)$ cells in expectation while processing the input sequence $\O$. This contradicts the  assumption specified in Lemma~\ref{cl:noEfficientDatastructure}.
\end{proof}

\begin{claim}
\label{cl:new:2}
The number of type $2$ levels in $\{1, \ldots, i_{max}\}$ is strictly less than $i_{max}/2$. 
\end{claim}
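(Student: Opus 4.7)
The plan is to leverage the hypothesis $\Exp{|P(\O)|} = o(|\O|\cdot(\lg n/\lg\lg n)^2)$ by summing the ``type-$2$ violation'' contributions across all type-$2$ levels simultaneously. Let $T$ denote the (random) total number of cell probes performed by the data structure while processing $\O$; by the assumption of Lemma~\ref{cl:noEfficientDatastructure} we have $\Exp{T} = o(|\O|\cdot(\lg n/\lg\lg n)^2)$. Suppose, for the sake of contradiction, that $K \ge i_{\max}/2$ of the levels in $\{1,\dots,i_{\max}\}$ are of type $2$.

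First I would observe that for each fixed level $i$, the intervals $\{I_B : (I_A,I_B) \in \mathcal{I}_i\}$ are pairwise disjoint in time (as siblings at the same level of the hierarchy $\mathcal{I}$), so if $T_i^B$ denotes the total number of cell probes (with multiplicity) that occur during $\bigcup_{(I_A,I_B) \in \mathcal{I}_i} I_B$, then $\sum_{(I_A,I_B) \in \mathcal{I}_i} |P(I_B)| \le T_i^B$ since each $|P(I_B)|$ counts only distinct cells. Summing the type-$2$ lower bound over the $K$ type-$2$ levels gives
\[
	K \cdot \frac{|\O|\cdot \lg^2 n}{8\gamma} \;\le\; \sum_{i=1}^{i_{\max}} \Exp{T_i^B}.
\]

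Next I would bound the right-hand side by charging each individual cell probe to at most $O(\lg|\O|/\lg\gamma)$ levels. For a probe occurring at position $k \in \O$, the number of levels $i \le i_{\max}$ at which $k$ lies inside some $I_B$ equals the number of ``$I_B$-edges'' along the root-to-leaf path in $\mathcal{I}$ that contains $k$. Each $I_B$-edge shrinks the node length by a factor of $\gamma$, while each $I_A$-edge multiplies the length by at most $1-1/\gamma$; combined with the invariant that every node of $\mathcal{I}$ has size at least $|\O|^{\kappa}$, a direct calculation on the product $(1-1/\gamma)^{i-m}\cdot(1/\gamma)^m \ge |\O|^{\kappa-1}$ yields $m \le (1-\kappa)\lg|\O|/\lg\gamma$ on any such path. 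Therefore
\[
	\sum_{i=1}^{i_{\max}} \Exp{T_i^B} \;\le\; \Exp{T} \cdot (1-\kappa)\frac{\lg|\O|}{\lg\gamma}.
\]
Combining the two displays with $i_{\max} = 0.1\gamma\lg|\O|/\lg\gamma$ and solving for $K$ gives
\[
	K \;\le\; \frac{8\gamma \cdot \Exp{T} \cdot (1-\kappa)\lg|\O|/\lg\gamma}{|\O|\cdot \lg^2 n}
	\;=\; o\!\left(\frac{\gamma\cdot\lg|\O|/\lg\gamma}{(\lg\lg n)^2}\right)
	\;=\; o\!\left(\frac{i_{\max}}{(\lg\lg n)^2}\right),
\]
which is strictly less than $i_{\max}/2$ for all sufficiently large $n$, contradicting $K\ge i_{\max}/2$.

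The main technical obstacle I anticipate is the path-counting step: although $I_A$-descents shrink intervals only by a factor of $1-1/\gamma \approx 1$ and can therefore occur many times, one must verify that they cannot accumulate enough multiplicative loss over $i_{\max}$ steps to permit more than $O(\lg|\O|/\lg\gamma)$ $I_B$-descents before the size invariant $|\O|^{\kappa}$ is violated. The key bookkeeping is to show that the contribution of the $I_A$-factor $(1-1/\gamma)^{i-m}$ is negligible compared to $(1/\gamma)^m$ when $i \le i_{\max} = 0.1\gamma\lg|\O|/\lg\gamma$, so that the $I_B$-count $m$ is essentially determined by the $I_B$-product alone; the rest of the argument is routine bookkeeping with the parameters $\kappa$, $\gamma$, and $i_{\max}$.
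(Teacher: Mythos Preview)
Your proof is correct and follows essentially the same approach as the paper's: both bound, for each position $k\in\O$, the number of levels $i$ at which $k$ lies in some $I_B$, and both obtain the bound $O(\log_\gamma|\O|)$ and combine it with the type-$2$ lower bound to contradict the probe hypothesis. Your anticipated ``obstacle'' is actually a non-issue --- since each $I_A$-factor satisfies $(1-1/\gamma)\le 1$, one has $(1/\gamma)^m \ge (1-1/\gamma)^{i-m}(1/\gamma)^m \ge |\O|^{\kappa-1}$ immediately, giving $m \le (1-\kappa)\log_\gamma|\O|$ with no appeal to $i_{\max}$ whatsoever; the paper states this even more directly by observing that the window containing $k$ shrinks by a factor of at least $\gamma$ between consecutive $I_B$-appearances.
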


\begin{proof}
Consider any  element $x$ in the sequence of updates and queries $\O$ and any level $i \in \{1, \ldots, i_{max}\}$. If there exists an ordered pair $(I_A, I_B) \in \I_i$ such that $x \in I_B$, then we say that $x$ ``appears'' in level $i$ and that the ``window'' of $x$ at level $i$ is equal to $|I_B|$. 

Fix any element $x$ in $\O$ and  scan through the levels $\{1, \ldots,
i_{max}\}$ in the hierarchy $\I$ in increasing order.  Clearly, the window
of $x$ at any level it appears in is at most $|\O|$. Further,  every time
the element $x$ appears in a level during this scan, its window shrinks by
at least a factor of $\gamma$. This property holds since  $|I_B| \simeq
(1/\gamma) \cdot |I|$ whenever we have $\DEC(I) = (I_A, I_B)$. Thus,  any
element in $\O$ can appear in at most $\log_{\gamma} |\O|$ levels.
Accordingly, from a simple counting argument, it follows that the number of
cell-probes made by the data structure while processing the input sequence
$\O$ is at least
\begin{align*}
	\Gamma := (1/\log_{\gamma} |\O|) \cdot \sum_{i =1}^{i_{max}} \sum_{(I_A,I_B) \in \mathcal{I}_i} |P(I_B)|.
\end{align*}
If the number of type $2$ levels in $\{1, \ldots, i_{max}\}$ were at least
$i_{max}/2$, then we would get:
\begin{align*}
\Exp{\Gamma}
&\geq \frac{1}{\log_{\gamma} |\O|}
		\cdot \sum_{i \in \{1, \ldots, i_{max}\} : i \text{ is of type 2}}
		\sum_{(I_A,I_B) \in \mathcal{I}_i}
		\Exp{ |P(I_B)|} \\
&\geq \frac{\lg \gamma}{\lg |\O|} \cdot \frac{i_{max}}{2} \cdot
		\frac{|\O| \cdot \lg^2 n}{8\gamma} \\
&\geq \Omega( |\O| \cdot \lg^2 n).
\end{align*}
The second inequality holds since by definition every type $2$ level $i \in \{1,
\ldots, i_{max}\}$ has $\Exp{| P(I_B) |} \geq |\O| \cdot \lg^2 n/(8
\gamma)$. The third inequality holds since $i_{max} = \Omega(\gamma
\cdot \lg |\O|/\lg \gamma)$. If other words, if
Claim~\ref{cl:new:2} were not true, then it would imply that the
data structure makes  $\Omega(|\O| \cdot \lg^2 n)$ cell-probes in
expectation while processing the input sequence $\O$. But this would
contradict the working assumption specified in the statement of
Lemma~\ref{cl:noEfficientDatastructure}.
\end{proof}

\begin{claim}
\label{cl:new:3}
The number of type $3$ levels in $\{1, \ldots, i_{max}\}$ is strictly less than $i_{max}/2$. 
\end{claim}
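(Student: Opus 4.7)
The plan is to mirror the template of Claim~\ref{cl:new:2}: assume for contradiction that at least $i_{\max}/2$ levels in $\{1,\ldots,i_{\max}\}$ are of type~$3$, aggregate the guarantees of these levels, and use a counting inequality to obtain a lower bound on the expected total number of cell-probes that matches $\Omega(|\O| \cdot (\lg n/\lg\lg n)^2)$, thereby contradicting the working hypothesis of Lemma~\ref{cl:noEfficientDatastructure}. Summing the defining inequality of a type-$3$ level over at least $i_{\max}/2$ such levels gives
\begin{align*}
\sum_{i=1}^{i_{\max}} \sum_{(I_A,I_B) \in \I_i} \Exp{C(I_A,I_B)} \;\geq\; \frac{i_{\max}}{2} \cdot \frac{1}{1200(c+1002)} \cdot \frac{|\O|}{\gamma} \cdot \frac{\lg n}{\lg\lg n},
\end{align*}
and substituting $i_{\max} = 0.1 \gamma \lg|\O|/\lg\gamma$, $\lg|\O| = \Theta(\lg n)$, and $\lg \gamma = \Theta(\lg\lg n)$, this simplifies to $\Omega(|\O| \cdot (\lg n/\lg\lg n)^2)$.

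The heart of the argument will be a charging lemma: \emph{every cell-probe performed by the data structure is counted in $C(I_A,I_B)$ for at most one sibling pair $(I_A,I_B)$ across the entire hierarchy $\I$}. Unlike Claim~\ref{cl:new:2}, where each probe was counted in up to $\log_\gamma|\O|$ levels, here the factor is just $O(1)$, and this sharper counting is precisely what converts the $\lg n/\lg\lg n$ gap in the type-$3$ guarantee into the required $(\lg n/\lg\lg n)^2$ contradiction. Once the charging lemma is in hand, the expected total number of cell-probes is at least $\sum_{i,(I_A,I_B)} \Exp{C(I_A,I_B)} = \Omega(|\O| \cdot (\lg n/\lg\lg n)^2)$, contradicting the hypothesis of Lemma~\ref{cl:noEfficientDatastructure}.

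I expect the charging lemma to be the main obstacle, so I outline its proof now. Fix a probe $p$ of some cell $c$ made during a query $q$ at time $t$. Let $\tau'$ be the time of the most recent update operation that probes $c$ strictly before $t$, and let $\tau \geq \tau'$ be the time of the most recent probe of $c$ of any kind strictly before $t$ (with each undefined if no such event exists). If $p$ is counted in $C(I_A,I_B)$, then (a)~$t \in I_B$, (b)~some update in $I_A$ probes $c$, so $\tau'$ exists and $\tau' \geq \operatorname{start}(I_A)$, and (c)~no probe of $c$ occurs in $I_B$ earlier than $t$, i.e.\ $\tau < \operatorname{start}(I_B)$, which is equivalent to $\tau \leq \operatorname{end}(I_A)$. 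Combining $\tau' \leq \tau \leq \operatorname{end}(I_A)$ with $\tau' \geq \operatorname{start}(I_A)$ yields $\tau' \in I_A$. So $\tau' \in I_A$ and $t \in I_B$ must hold simultaneously for the same sibling pair. Viewing $\I$ as a binary tree on $\O$, this forces $(I_A,I_B)$ to be the sibling pair at the lowest common ancestor of $\tau'$ and $t$: at strictly higher levels the two points lie inside a common child of the sibling pair and so cannot be split across $I_A$ and $I_B$; at strictly lower levels the two points lie in the disjoint subtrees rooted at the LCA's $I_A$- and $I_B$-children, so no single lower sibling pair can contain both; and at the LCA itself the condition $\tau' < t$ together with the fact that $I_A$ precedes $I_B$ in time routes $\tau'$ into the $I_A$-child and $t$ into the $I_B$-child. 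Hence $(I_A,I_B)$ is uniquely determined by $p$, completing the charging lemma and the proof plan.
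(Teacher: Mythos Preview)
Your proposal is correct and matches the paper's approach: both establish $\sum_{i}\sum_{(I_A,I_B)\in\I_i} C(I_A,I_B) \le \Gamma$ via a least-common-ancestor charging argument, and then combine this with the per-level type-$3$ guarantee and the value of $i_{\max}$ to contradict the hypothesis $\Exp{\Gamma}=o\!\left(|\O|(\lg n/\lg\lg n)^2\right)$. One cosmetic point: where you write ``most recent update operation that probes $c$'' (and likewise in condition~(b)), the paper's definition of $C(X,Y)$ concerns the cell being \emph{written} during $X$ by any operation, so $\tau'$ should be the time of the last \emph{write} to $c$ before $t$; with that reading your LCA argument goes through verbatim, and your charging to query read-probes is in fact the clean dual of the paper's charging to cell-writes.
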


\begin{proof}
Let $\Gamma$ be a random variable that denotes the number of cell-probes made by the data structure while processing the (random) input sequence $\O$. Then we have:
\begin{equation}
\label{eq:sum}
\sum_{i} \sum_{(I_A, I_B) \in \I_i} C(I_A, I_B) \leq \Gamma
\end{equation}
Equation~\ref{eq:sum} holds since each cell-write made by the data structure contributes at most once to its left hand side (LHS). To see why this is true, consider the scenario where the data structure writes a cell $c$ while processing an input $x$ (say) in the input sequence $\O$. Suppose that the same cell $c$ is read by the data structure while answering a subsequent query $y$ in $\O$, and, furthermore, the cell is {\em not} read by the data structure while processing any other input that appears in the interval between $x$ and $y$. Let $I_x$ and $I_y$ respectively denote the leaf-nodes in the hierarchy tree containing $x$ and $y$, and suppose that $I_x \neq I_y$. Let $I$ be the least common ancestor of $I_x$ and $I_y$ in the hierarchy tree, and let $\DEC(I) = (I_A, I_B)$. Then the cell-write of $c$ at $x$ contributes one towards the counter $C(I_A, I_B)$, and zero towards every other counter $C(I'_A, I'_B)$. Thus, the net contribution of the cell-write of $c$ at $x$ towards the LHS is one. In contrast, if it were the case that $I_x = I_y$, or if the cell $c$ was not read at all while processing any query that appears after $x$ in $\O$, then the net contribution of the cell-write  at $x$ towards the LHS would have been zero. To summarize, we conclude that each cell-write made by the data structure contributes at most one towards the LHS.

By definition, every type $3$ level $i$ has
\begin{align*}
	\sum_{(I_A, I_B) \in \I_i} \Exp{C(I_A, I_B)} \geq \frac{1}{1200 (c + 1002)} \cdot \frac{|\O|}{\gamma} \cdot \frac{\lg n}{\lg \lg n}.
\end{align*}
Let $K_3 \subseteq \{1, \ldots, i_{max}\}$ denote the set of type $3$ levels. Now,  Equation~\ref{eq:sum} implies that:
\begin{align*}
\Exp{\Gamma}
&\geq \sum_{i} \sum_{(I_A, I_B) \in \I_i} \Exp{C(I_A, I_B)} \\
&\geq \sum_{i \in K_3} \sum_{(I_A, I_B) \in \I_i} \Exp{C(I_A, I_B)} \\
&= |K_3| \cdot \frac{1}{1200 (c + 1002)} \cdot \frac{|\O|}{\gamma} \cdot \frac{\lg n}{\lg \lg n}.
\end{align*}
Rearranging the terms in the above inequality, we get:
\begin{align*}
|K_3|
&\leq 1200(c+1002) \cdot \frac{\Exp{\Gamma} \cdot \gamma}{|\O| \cdot \frac{\lg n}{\lg \lg n}} \\
&= 1200(c+1002) \cdot \frac{o\left(|\O| \cdot \left(\frac{\lg n}{\lg \lg n}\right)^2 \right) \cdot \gamma}{|\O| \cdot \left(\frac{\lg n}{\lg \lg n} \right)} \\
&= o(1) \cdot \gamma \cdot \frac{\lg n}{\lg \lg n} \\
&< i_{max}/2.
\end{align*}
In the above derivation, the first equality holds since as per the statement of
Lemma~\ref{cl:noEfficientDatastructure}, the data structure probes $o(|\O| \cdot
		(\lg n/\lg \lg n)^2)$ cells in expectation while processing the input
sequence $\O$.  The third equality holds since $\gamma = \lg^{\beta} n$ and
$|\O| = n^{\alpha}$ for some constants $\alpha, \beta > 0$. The last inequality
holds since $i_{max} = 0.1 \gamma \log_{\gamma} |\O|$, $\gamma = \lg^{\beta} n$,
	  $|\O| = \Theta(n^{\alpha})$ and $\alpha, \beta$ are constants.
\end{proof}

\section{Proof of Theorem~\ref{Thm:DynamicOMv}}
\label{sec:omv}

Throughout this section, we will continue with the notations introduced in Section~\ref{Sec:GeneralFramework}. Further, we will set the values of the parameters $\alpha, \beta, \gamma$ and $\kappa$ as follows.
\begin{equation}
\label{eq:eqTru}
\alpha := 2, \beta := 2000, \gamma := \lg^{\beta} n = \lg^{2000} n, \text{ and } \kappa := 2/3.
\end{equation}

\subsection{Defining the random input sequence \texorpdfstring{$\O$}{O}}

\noindent
The (random) input sequence $\O$ consists of $n^2$ updates and $n$ queries. Such an input sequence is of size $\Theta(n^{\alpha})$ since we have set $\alpha = 2$ (see Equation~\ref{eq:eqTru}). The input sequence $\O$ is constructed as follows.
First, we define a sequence of $n^2$ updates: For $1 \leq  k \leq n^2$, the $k^{th}$ update is denoted by $(i_{k}, j_{k}, x_{k})$, and it consists of a {\em location} $(i_k, j_k) \in [1, n] \times [1, n]$ and a {\em value} $x_k \in \F$.  The $k^{th}$ update sets the matrix entry $M_{i_{k}j_{k}}$ to value $x_{k} \in \F$, where $x_{k}$ is picked uniformly at random from $\F$. However, the {\em location} of the $k^{th}$ update, given by $(i_k, j_k)$, is fixed deterministically.  
To finish  the construction of the sequence $\O$,  after each $n^{th}$ update we add a query chosen uniformly at random from $\F^n$. Formally,
after each update $(i_{r \cdot n}, j_{r \cdot n}, x_{r \cdot n})$ for $1 \leq r \leq n$, we insert
a uniformly random query  $v_{r} \in \F^n$.
We ensure that the sequence of locations of the updates  are \emph{well-spread}, which means that they satisfy two properties. 
\begin{enumerate}
\item All the pairs $(i_{k}, j_{k})$ are mutually disjoint.
\item For every index $n^{4/3} \leq r \leq n^2$ and every set of $n/2$ row indices $S \subseteq \{1,\dots,n\}$, there exists
    a subset $S^* \subseteq S$ of size $|S^*| \leq 8n^2 / r$ such that $|\bigcup_{k \leq r : i_{k} \in S^*} \{ j_k \} | \geq n/4$.
\end{enumerate}
Such a well-spread sequence of indices exists due to Lemma~2 in \cite{clifford2015new}.

\subsection{Proving that \texorpdfstring{$\O$}{O} is well-behaved}

\smallskip
\noindent
We begin by defining some additional  notations.
Let $u(X)$ and $q(X)$ respectively denote the number of updates and queries in an interval $X \subseteq \O$. It follows that $|X| = q(X) + u(X)$. For $1 \leq j \leq q(X)$, let $X_{j}$ denote the sequence of inputs in $X$ preceding the $j^{th}$ query in $X$. Note that $X_{j}$ is always a prefix of $X$. Consider any two consecutive intervals $X, Y \subseteq \O$. The counter $C_j(X, Y)$ denotes the number of times the following event occurs: While answering the $j^{th}$ query in $Y$, the data structure probes a cell that was written in $X$ but was not previously probed in $Y_j$. Recall the definition of the counter $C(X, Y)$ from Section~\ref{Sec:GeneralFramework}, and note that $C(X, Y) =\sum_{j=1}^{q(Y)} C_j(X, Y)$. 
Our main challenge will be to prove the lemma below. The proof of Lemma~\ref{lem:DynamicOMv} appears in Section~\ref{sec:lem:DynamicOMv}.

\begin{lemma}
\label{lem:DynamicOMv}
Every data structure for the dynamic OMv problem satisfies the following property while processing the random input sequence $\O$ described above. Consider any  pair of consecutive intervals $X, Y \subseteq \O$ with $|X| \simeq \gamma \cdot |Y|$ and $|Y| \geq |\O|^{\kappa}$, such that  $\Exp{ |P(X)| } \leq |X| \cdot \lg^2 n$ and  $\Exp{ |P(Y)| } \leq |Y| \cdot \lg^2 n$.
Then we must have:
\begin{align*}
	\Exp{C_j(X, Y)} \geq \frac{1}{100 (c+1002)} \cdot n \cdot \left(\lg n/\lg \lg n\right) \text{ for all } 1 \leq j  \leq q(Y).
\end{align*}
\end{lemma}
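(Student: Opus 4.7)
The plan is to proceed by contradiction, following the cell-sampling encoding framework of Larsen~\cite{larsen2012higher} adapted to OMv by Clifford et al.~\cite{clifford2015new}. Fix $j \in \{1, \dots, q(Y)\}$ and suppose $\Exp{C_j(X,Y)} < \frac{n \lg n}{100(c+1002) \lg\lg n}$. The strategy is to design a one-way protocol in which a sender, knowing the entire execution, conveys information to a receiver, knowing all random inputs of $\O$ except the update values in $X$ and knowing the data structure state at the start of $X$, that suffices to recover the answer $A := M^{(j)} v_j$. On the one hand, a lower bound of $\Omega(n \lg n)$ on the conditional entropy of $A$ given the receiver's information will follow from the randomness of $X$'s update values together with a well-spread-based lower bound on the number of rows touched by $X$. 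On the other hand, a cell-sampling encoding will deliver $A$ in $o(n \lg n)$ bits, yielding the contradiction.

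The first step is to apply Markov's inequality to $C_j(X,Y)$, $|P(X)|$, and $|P(Y_j)|$, combined with a union bound, to obtain an event $\mathcal{E}$ of positive constant probability on which simultaneously $C_j(X,Y) = O(n \lg n / \lg\lg n)$, $|P(X)| = O(|X| \lg^2 n)$, $|P(Y_j)| = O(|Y| \lg^2 n)$, the query vector $v_j$ has no zero coordinates (which occurs with probability $1 - o(1)$ since $|\F| = n^{\Theta(1)}$), and the set $R$ of rows touched by $X$'s updates has size $\Omega(n)$. The last bound comes from the well-spread construction: since $|X| \simeq \gamma \cdot |Y| \geq \gamma \cdot |\O|^{\kappa} \geq n^{4/3}$, Property~2 applied to the prefix up to the end of $X$ forces at least $\Omega(n)$ distinct rows to be touched by updates in $X$ (otherwise, a set of $n/2$ ``missed'' rows would contradict the well-spread property). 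Conditioning on all of the receiver's information, the contribution $M^X v_j$ is uniform in $\F^R$, so the conditional entropy of $A$ is at least $|R| \lg |\F| = \Omega(n \lg n)$.

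The second step is the protocol itself: sample each cell in $W_X$ (the cells written during $X$) independently with probability $p$, using shared randomness between sender and receiver. With positive constant probability the sample captures all $C_j(X,Y)$ cells that query $j$ probes fresh from $X$, and the sender transmits the contents of the sampled cells together with auxiliary bridge-information enabling the receiver to simulate the operations of $Y_j$ (before query $j$) and then query $j$ itself, starting from the pre-$X$ memory state. The receiver reproduces $A$ whenever the sample is successful. The total message length is $O(p \cdot |P(X)| \cdot w)$ plus the bridge cost; with $p$ and the bridge sampling tuned appropriately, this is intended to be $o(n \lg n)$, contradicting the entropy lower bound on $A$.

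The hardest step, I expect, is the careful tuning of $p$ together with the handling of the bridge cells $W_X \cap P(Y_j)$, which the receiver needs to simulate the operations in $Y_j$ before query $j$. A naive choice either makes the sample's success probability vanish or makes the encoded message too long. The right balance should use two simultaneous layers of cell-sampling --- one on $W_X$ to cover query $j$ itself, one on $W_X \cap P(Y_j)$ to compress the bridge cost --- in the spirit of Weinstein and Yu's~\cite{weinstein2016amortized} amortized argument, so that both the fresh $C_j$ probes and all bridge probes succeed simultaneously with constant probability while the total bit-count remains $o(n \lg n)$. Making this balance work is the crux of the proof.
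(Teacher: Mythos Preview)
Your plan has a real gap: the object you try to compress is too small. You aim to encode the single answer $A = M^{(j)} v_j$, whose conditional entropy is at most $n\lg|\F|$, and you want the protocol to cost $o(n\lg n)$ bits. But any cell-sampling scheme that lets the receiver simulate query $j$ must, with non-negligible probability, contain all the fresh cells that query $j$ reads from $P(X)$. If you sample $\C\subseteq P(X)$ of size $\Delta$, the probability a fixed query is resolved is roughly $(\Delta/|P(X)|)^{C_j}$, and under the contradiction hypothesis $C_j$ is only bounded by $\Theta(n\lg n/\lg\lg n)$; since $|P(X)|$ can be as large as $|X|\lg^2 n$ with $|X|\ge n^{4/3}$, any $\Delta$ small enough that $\Delta\cdot w = o(n\lg n)$ gives $\Delta/|P(X)| = o(1/\operatorname{polylog} n)$ and hence a resolution probability of $|\F|^{-\Theta(n)}$. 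So your ``sample succeeds with constant probability'' step cannot hold, and no tuning of $p$ or of a second bridge-sampling layer rescues this: the mismatch is between the $\Theta(n\lg n)$-bit target and the $\Theta(|X|)$-bit scale of $P(X)$.

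The paper's proof fixes this by changing the target of the encoding: it compresses the \emph{update values in $X$}, which have entropy $u(X)\lg|\F| = \Theta(|X|\lg n)$. Now a sample $\C^*\subseteq P(X)$ of size $\Theta(|X|)$ is affordable, and the calculation above shows such a $\C^*$ resolves $|\F|^{(1-o(1))n}$ query vectors. The crucial extra idea you are missing is that one then selects not one but $k=(|X|-1)/n$ resolved query vectors $v_1,\dots,v_k$ and measures their joint informativeness via the \emph{rank sum} $\RS(v_1,\dots,v_k)=\sum_i \dim\spane(v_1^{|R_i},\dots,v_k^{|R_i})$; the well-spread property is used (via Clifford et al.'s lemma) to show that almost all $k$-sets have rank sum $\Omega(nk)=\Omega(|X|)$, so the $k$ answers together pin down all but a $(1-\Omega(1))$-fraction of the update values. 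The encoder then sends $\C^*$, $P(Y_j)$, and short basis-completions per row, for a total strictly below $u(X)\lg|\F|$ bits. Your proposal never leaves the single-query regime, so it cannot access this rank-sum amplification, and the well-spread property is used quite differently from how you invoke it (it controls column coverage for the rank-sum bound, not the number of rows touched by $X$).
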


\noindent
{\bf Proof of Theorem~\ref{Thm:DynamicOMv}}. Consider any data structure for the dynamic OMv problem, and any pair of consecutive intervals $X, Y \subseteq \O$ with $|X| \simeq \gamma \cdot |Y|$ and $|Y| \geq |\O|^{\kappa}$. If either condition (1) or condition (2) as stated in Definition~\ref{def:well:behave} gets violated, then we have nothing more to prove. Henceforth, we assume that both the conditions (1) and (2) hold, so that we have  $\Exp{|P(X)|} \leq |X| \cdot \lg^2 n$ and $\Exp{|P(Y)|} \leq |Y| \cdot \lg^2 n$. Now, applying Lemma~\ref{lem:DynamicOMv}, we get:
\begin{equation}
\label{eq:eqBre}
\Exp{C(X, Y)}
= \sum_{j = 1}^{q(Y)} \Exp{C_j(X, Y)}
\geq \frac{1}{200 (c+1002)} \cdot q(Y) \cdot n \cdot \frac{\lg n}{\lg \lg n}.
\end{equation}
Note that $|Y| \geq |\O|^{\kappa} = \Theta(n^{\alpha \kappa}) =
\Theta(n^{4/3})$. The last equality holds since $\alpha = 2 $ and $\kappa = 2/3$
as per Equation~\ref{eq:eqTru}, and because the sum only contains a
finite number of summands. Recall that the input sequence $\O$
contains a query after every $n$ updates. Thus, the size of the interval $Y
\subseteq \O$ is large enough for us to infer that $q(Y) \geq \frac{1}{2} |Y|/n$.
Plugging this in Equation~\ref{eq:eqBre}, we get:
\begin{align*}
\Exp{C(X, Y)} \geq \frac{1}{200 (c+1002)} \cdot |Y| \cdot \frac{\lg n}{\lg \lg n}.
\end{align*}
Thus, condition (3) as stated in Definition~\ref{def:well:behave} gets violated
whenever the conditions (1) and (2) hold.  This implies that the input sequence
$\O$ is well-behaved, and concludes the proof of Theorem~\ref{Thm:DynamicOMv}.

\subsection{Proof of Lemma~\ref{lem:DynamicOMv}.}
\label{sec:lem:DynamicOMv}

Throughout the proof of Lemma~\ref{lem:DynamicOMv}, we fix the following quantities.
\begin{enumerate}
\item A data structure for the dynamic OMv problem. 
\item Two consecutive intervals $X, Y \subseteq \O$ such that $|X| \simeq \gamma \cdot |Y|$ and $|Y| \geq |\O|^{\kappa}$. 
\item An index $1 \leq j \leq q(Y)$. 
\item All the inputs in $\O$ that appear before the beginning of the interval $X$.
\item All the inputs in $\O$ that appear after the $j^{th}$ query in $Y$. 
\item All the inputs in $Y_j$. 
\end{enumerate}
Thus, everything is fixed except  the values of the updates and the queries  in $X$ and the $j^{th}$ query in $Y$.
Conditioned on these events, we next assume that:
\begin{equation}
\label{eq:assume:X}
 \Exp{|P(X)|}  \leq |X| \cdot \lg^2 n \text{ and }  \Exp{|P(Y)|}  \leq |Y| \cdot \lg^2 n 
   \end{equation}
Finally, for the sake of contradiction, we assume that:
\begin{align}
\label{eq:assume:contradict}
 \Exp{C_j(X,Y)} &< \frac{1}{100(c+1002)} \cdot n \cdot \left(\lg n/\lg \lg n\right)
\end{align}
We  now show how to encode the sequence of values of the updates in $X$ using  fewer than $u(X) \cdot \lg |\F|$ bits. This leads to a contradiction since the entropy of the object under consideration is exactly $u(X) \cdot \lg |\F|$ bits. This  concludes  the proof of Lemma~\ref{lem:DynamicOMv}.

\smallskip
\noindent {\bf Notations.}
We define some notations  that will be used in the encoding proof. We let $M_X$ and $M_{Y_j}$ respectively denote the state of the matrix $M$ just after the interval $X$ and $Y_j$. For each row $1 \leq i \leq n$, we let $m_{X, i}$ and $m_{Y_j, i}$ respectively denote the vectors in $\F^n$ that correspond to the row $i$ of matrices $M_X$ and $M_{Y_j}$. Consider any set of indices $S \subseteq \{1,\dots,n\}$ and any vector $v \in \F^n$. We let $v^{|S} \in \F^{|S|}$ denote the vector with one entry $v(i)$ for each $i \in S$.  In other words, this gives the {\em restriction} of the vector $v$ into the coordinates specified by the indices in $S$. For $1 \leq i \leq n$, we let $R_{i}$ denote the set of column indices updated in the $i^{th}$ row of $M$ during to the interval $X \subseteq \O$. More formally, we have $R_{i} = \{ j' : (i', j', \cdot) \in X \}$. 

\smallskip
\noindent {\bf Preliminaries.}
 We  introduce the concept of a {\em rank sum} in Definition~\ref{def:ranksum}.
 To get some intuition behind this definition, recall that the locations of the
 updates in $X$ are fixed and mutually disjoint. Only the {\em values} of the
 updates in $X$ can vary. Furthermore, since the updates preceding $X$ are fixed
 in advance, the values of the remaining entries in the matrix $M_X$ are known
 to the decoder.  Thus,  to encode the sequence of updates in $X$, it suffices
 to encode the vectors $m_{X,i}^{|R_i}$ for all $1 \leq i \leq n$.  Towards this
 end, the encoder will first find a suitable set of vectors $\{v_1, \ldots, v_k
 \} \subseteq \F^n$, for some positive integer $k$ whose value will be
 determined later on. Next,  as part of the procedure for encoding the vectors
 $m_{X,i}^{|R_i}$, she will  convey (in an indirect manner to be specified
		 later) to the decoder the results of the inner products $\left\langle
 m_{X, i}^{|R_i} , v_{k'}^{|R_i} \right\rangle$ for all $1 \leq k' \leq k$.  For
 this encoding to be efficient, the decoder should be able to retrieve a lot of
 information about the vectors $m_{X, i}^{|R_i}$ by looking at the results of
 these inner products. This means that  for most of the rows $i \in [1, n]$ we
 want most of the vectors $v_1^{|R_i}, \ldots, v_k^{|R_i}$ to be linearly
 independent, or, equivalently, for most of the rows $i \in [1, n]$ we want
 $\dim\left(\spane\left(v_1^{|R_i}, \ldots, v_k^{|R_i}\right)\right)$ to be
 large. This intuition can be formalized by saying that we want the {\em rank
	 sum} $\RS(v_1, \ldots, v_k)$, as defined below, to be large.

\begin{definition}
\label{def:ranksum}
	    The {\em rank sum} of a set of $k$ vectors $\{v_1, \ldots, v_k \} \subseteq \F^n$
	    is	given by $\RS(v_1, \dots, v_k) = \sum_{i=1}^n \dim\left(\spane\left(v_1^{|R_i}, \ldots, v_k^{|R_i}\right)\right)$.
\end{definition}

Recall that the encoder will have to convey the results of the inner products $\left\langle m_{X,i}^{|R_i} , v_{k'}^{|R_i} \right \rangle$ to the decoder {\em in an indirect manner}. We now elaborate on this aspect of the encoding procedure in a bit more details. Basically, the encoder will identify a small set of cells $\C^* \subseteq P(X)$, and send their addresses and contents (at the end of the interval $X$) to the decoder. This set $\C^*$ will contain the necessary information about the inner products $\left\langle m_{X,i}^{|R_i} , v_{k'}^{|R_i} \right \rangle$. To see why this is possible, suppose that the decoder simulates the data structure from the beginning of $\O$ till just before the interval $X$, then skips the intervals $X$ and $Y_j$, and then tries to simulate a query $v \in \F^n$ at the $j^{th}$ position of $Y$. Furthermore, suppose that the decoder gets {\em lucky}, meaning that the query algorithm never has to read the content of a cell in $P(X) \setminus \C^*$. In such an event we say that the set $\C^*$ ``resolves'' the vector $v$. The key insight is that if $\C^*$ resolves $v$, then the decoder  can recover the vector $M_{Y_j} \cdot v$ by looking at the contents of the cells in $\C^*$ (and a few other minor things that will be specified later on). Since the inputs in $Y_j$ are fixed in advance, from $M_{Y_j} \cdot v$ the decoder can  infer the vector $M_X \cdot v$. Similarly, since the inputs preceding $X$ are fixed in advance, from $M_X \cdot v$ the decoder can infer the inner products $\left\langle m_{X,i}^{|R_i} , v^{|R_i} \right \rangle$. To summarize, the encoder will convey to the decoder the inner products $\left\langle m_{X,i}^{|R_i} , v^{|R_i} \right \rangle$ {\em in an indirect manner}, by sending her the addresses and contents of the  cells in $\C^*$. 

Now, recall the motivation behind Definition~\ref{def:ranksum}. It implies that ideally we would like to have a small set of cells $\C^* \subseteq P(X)$ that resolves a ``nice'' set of vectors in $\F^n$ with large rank sum. Furthermore, the decoder will also need  to identify such a ``nice'' set of vectors. One way to do this is to require that there are a large number of subsets of $\F^n$ that are nice. If this is the case, then the encoder and the decoder can use shared randomness to sample some subsets of $\F^n$ uniformly at random, and with good enough probability, one of the sampled subsets will be nice. This intuition is formalized in Definition~\ref{def:resolve}, where $\M(\C)$ corresponds to the collection of all such nice subsets of query vectors $v \in \F^n$ for the set of cells $\C \subseteq P(X)$.

\begin{definition}
\label{def:resolve}
Consider any subset of cells $\C \subseteq P(X)$ and any vector $v \in \F^n$. We say that  $\C$ ``resolves''  $v$ iff the data structure does not probe any cell in $P(X) \setminus \C$ when it is asked to return the answer $M v$ by the $j^{th}$ query in $Y$. We let $Q(\C) \subseteq \F^n$ denote the set of vectors that are resolved by $\C$. Finally, we let $\M(\C) \subseteq 2^{Q(\C)}$ denote the collection of all subsets $\{v_1, \ldots, v_k\} \subseteq Q(\C)$ of size $k = (|X| - 1)/n$ with rank sum $\RS(v_1, \dots, v_k) \geq nk / 32$.
\end{definition}

\begin{lemma}
\label{lem:CellSamplingDynamicOMv}
	
	With probability at least $1/4$ over the randomness of $X$,
	there exists a subset of cells $\C^* \subseteq P(X)$ of size
	$|\C^*| = |X| \cdot \lg |\F| / (1024w)$ such that
	$|\M(\C^*)| \geq |\F|^{0.999 \cdot nk}$.
\end{lemma}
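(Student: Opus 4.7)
The plan is a cell-sampling argument of the type pioneered by Panigrahy et al.~\cite{panigrahy10lower} and Larsen~\cite{larsen2012higher}, followed by a combinatorial count of $k$-subsets with large rank sum.

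First, I would apply Markov's inequality twice to identify a ``typical'' $X$. By \eqref{eq:assume:X}, the set $P(X)$ satisfies $|P(X)| \le 8|X|\lg^{2}n$ with large constant probability over $X$. For a uniformly random query vector $v \in \F^n$ placed at position $j$ of $Y$, the counter $C_j(X,Y)$ equals $|T_X(v)|$, where $T_X(v) \subseteq P(X)$ is the set of cells in $P(X)$ probed while answering that query. Assumption~\eqref{eq:assume:contradict} combined with a second Markov step then yields, with large constant probability over $X$, the bound $\Exp[v]{|T_X(v)|\mid X} = O(n\lg n / \lg\lg n)$. Call any $X$ satisfying both bounds \emph{typical}; typical $X$'s arise with probability at least $3/4$. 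Within a typical $X$, a further Markov step over uniform $v$ shows that an arbitrarily large constant fraction (say $99/100$) of queries are \emph{good}, meaning $|T_X(v)| \le t^{*} := c_1\, n\lg n / \lg\lg n$ for a small constant $c_1$ to be chosen.

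Next, I would sample $\C^{*} \subseteq P(X)$ uniformly at random among subsets of the target size $s := |X|\lg|\F|/(1024w)$. For each good $v$, the standard cell-sampling computation gives
\[
\Pr_{\C^{*}}\bigl[\,T_X(v)\subseteq\C^{*}\,\bigr] \;\ge\; \bigl(s/(2|P(X)|)\bigr)^{t^{*}},
\]
provided $t^{*} \le s/2$; this is easy to verify since $|X| \ge |\O|^{\kappa} = \Theta(n^{4/3})$ dwarfs $n\lg n/\lg\lg n$. Plugging in $|P(X)|\le 8|X|\lg^{2}n$ and using $\lg|\F|/w = \Theta(1)$, the ratio $s/(2|P(X)|)$ is $\Omega(1/\lg^{2}n)$, so the resolution probability is at least $2^{-Ct^{*}\lg\lg n}$ for an absolute constant $C$, which is at least $|\F|^{-0.0005n}$ once $c_1$ is chosen small enough (using $|\F| = n^{\Theta(1)}$). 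By linearity of expectation the expected number of good $v$'s resolved by $\C^{*}$ is at least $|\F|^{0.9994n}$, and a Markov argument over the choice of $\C^{*}$ then produces a specific $\C^{*}$ with $|Q(\C^{*})| \ge |\F|^{0.999n}$.

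The main technical obstacle is to convert the lower bound on $|Q(\C^{*})|$ into the claimed lower bound on $|\M(\C^{*})|$. Concretely, I have to show that every $Q \subseteq \F^{n}$ with $|Q| \ge |\F|^{0.999n}$ contains at least $|\F|^{0.999nk}$ many $k$-subsets with rank sum $\ge nk/32$. The approach is to upper bound the fraction of ordered $k$-tuples $(v_{1},\ldots,v_{k}) \in (\F^{n})^{k}$ whose rank sum is strictly less than $nk/32$, using the standard per-row bound $\Pr[\dim(\spane(v_{1}^{|R_{i}},\ldots,v_{k}^{|R_{i}})) \le d_{i}] \le |\F|^{-(|R_{i}|-d_{i})(k-d_{i})}$ and summing over dimension profiles $(d_{i})_{i=1}^{n}$ with $\sum_{i} d_{i} < nk/32$. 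The well-spread property of the update locations---the disjointness of all pairs $(i_{r},j_{r})$ together with the spread guaranteed by the second well-spread condition---forces the ``ideal'' rank sum $\sum_{i}\min(k,|R_{i}|)$ to be $\Omega(nk)$, so that the slack in the exponent comfortably beats $|\F|^{-0.001nk}$. Intersecting the resulting set of good ordered $k$-tuples with $Q^{k}$, whose density in $(\F^{n})^{k}$ is at least $|\F|^{-0.001nk}$, still leaves at least $|\F|^{0.999nk}$ good ordered $k$-tuples inside $Q$, and essentially as many unordered subsets after accounting for the $k!$ factor (negligible since $\log k! = o(nk\log|\F|)$). The careful bookkeeping of constants in this final counting step is the hard part of the proof.
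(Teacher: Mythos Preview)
Your proposal is correct and follows essentially the same route as the paper. Both arguments (i) use two Markov applications on~\eqref{eq:assume:X} and~\eqref{eq:assume:contradict} to isolate a ``good'' $X$ with constant probability, (ii) apply a third Markov step over the uniform query $v$ to get a large set of queries touching few cells of $P(X)$, (iii) do the standard cell-sampling calculation to produce a $\C^*$ resolving $|\F|^{0.999n}$ queries, and (iv) subtract off the $k$-subsets with small rank sum. The only notable difference is in step~(iv): the paper simply invokes Lemma~5 of Clifford et al.~\cite{clifford2015new} to bound the number of low-rank-sum $k$-sets by $|\F|^{(27/32)nk}$, whereas you sketch the argument behind that lemma (the per-row dimension bound together with the well-spread property). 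Your acknowledgment that the constant bookkeeping there is the delicate part is accurate; in particular, the restrictions $v^{|R_i}$ for different rows share coordinates, so the per-row probabilities do not multiply naively and the second well-spread condition is genuinely needed---this is exactly what Clifford et al.'s lemma handles.
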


Proving Lemma~\ref{lem:CellSamplingDynamicOMv} requires a standard application
of the cell-sampling technique. We defer the proof of
Lemma~\ref{lem:CellSamplingDynamicOMv} to the end of the section (see
Section~\ref{sec:ProofCellSamplingDynamicOMv}).
Instead we focus on showing that Lemma~\ref{lem:CellSamplingDynamicOMv} leads to
an encoding of the sequence of updates in $X$, and that the resulting encoding
uses less than $u(X) \cdot \lg |\F|$ bits in expectation.  This concludes the
proof of Lemma~\ref{lem:DynamicOMv}.

\smallskip
\noindent {\bf Shared randomness between the encoder and the decoder.} 

\smallskip
\noindent We assume that both the encoder and the decoder have access to a joint source of random bits.
 They use this random source to sample $m = |\F|^{nk/512}$ sets
$\Gamma_1, \dots, \Gamma_m$. Each set $\Gamma_i$ consists of $k = (|X| - 1) / n$ vectors that are picked uniformly at random from $\F^n$.

\smallskip
\noindent {\bf Encoding the sequence of updates in $X$.}

\smallskip
\noindent
Recall that the encoder and the decoder  know  the inputs preceding $X$ and 
the inputs in $Y_j$, and both of them know the sets $\Gamma_1, \dots, \Gamma_m \subseteq \F^n$.
The encoding procedure works as follows.

\noindent {\bf Step 1.} 
			We simulate the data structure  until the end of $X$.
			If there exists a  set of cells $\C^* \subseteq P(X)$ as per Lemma~\ref{lem:CellSamplingDynamicOMv},
			then we proceed to  step (2) of the encoding.
			Otherwise we send a $0$-bit and  the naive encoding of the sequence of updates
			in $X$, and then   we terminate the encoding procedure. By Lemma~\ref{lem:CellSamplingDynamicOMv}, the probability of this event (that no such $\C^*$     exists) is at most $3/4$.

\noindent {\bf Step 2.}		
			Since we are in step (2), we must have found a set of cells $\C^* \subseteq P(X)$ as per Lemma~\ref{lem:CellSamplingDynamicOMv}.
			We check if there exists an index $1 \leq i^* \leq m$ such that $\Gamma_{i^*} \in \mathcal{M}(\C)$.
			If such an index exists, then we proceed to step (3) of the encoding.
			Otherwise we send a $0$-bit and  the naive encoding of the sequence of updates in $X$, and then we terminate the encoding procedure.
			The probability of this event (that no such index $i^*$ exists)  is
			at most
			$(1 - |\mathcal{M}(\C)| / \binom{|\F|^n}{k})^m
				\leq \exp( -m |\mathcal{M}(\C)| / |\F|^{nk}) \leq \exp( -|\F|^{\Omega(nk)}) \leq 1/100$.
				The second inequality holds since $m = |\F|^{nk/512}$ and
				$|\M(\C)| \geq |\F|^{0.999 \cdot nk}$.

\noindent {\bf Step 3.}
  			We send a $1$-bit and then send an encoding of the index $i^*$, followed by
			the addresses and contents\footnote{For every cell in $\C^*$, we encode its content at the end of the interval $X$.} of the cells in $\C^*$.  Encoding the index $i^*$ takes $\lg m$ bits. Encoding the address and content of one cell requires $2 w$ bits. Hence,
			the total number of bits sent is: $1+\lg m + (2w) \cdot |\C^*|  = 1 + \lg |\F|^{nk/512} + (2w) \cdot |X| \cdot \lg |\F|  / (1024 w)
							\leq 1 + (nk) \cdot \lg |\F|/512 + |X| \cdot \lg |\F| / 512 = 1 + (|X|-1) \cdot \lg |\F|/512 + |X| \cdot \lg |\F|/512
							\leq 1 + |X| \cdot \lg |\F|/256$.

\noindent {\bf Step 4.}	 We send the addresses and contents\footnote{For every cell in $P(Y_j)$, we encode its content at the end of the interval $Y_j$.} of the cells in  $P(Y_{j})$.
In expectation, the total number of bits required is:
$\Exp{ |P(Y_{j})| }  \cdot (2w) \leq \Exp{|P(Y)|} \cdot (2w) \leq
|Y| \cdot \lg^2 n \cdot (2w)  = o(|X|)$. The second inequality
follows from Equation~\ref{eq:assume:X}. The last equality holds
since $w = \lg n$, $|X| \simeq \gamma \cdot |Y|$ and $\gamma =
\Theta(\lg^{2000} n)$ as per Equation~\ref{eq:eqTru}.

\noindent {\bf Step 5.}
			We iterate over the rows of the matrix $M_{X}$ from $1$ to $n$.
			Let  the $k$ vectors in $\Gamma_{i^*}$ be denoted by $v_1, \dots, v_k$.
			For each row $1 \leq i \leq n$, we proceed as follows:
			\begin{itemize}
				\item[(a)]
						We create an empty set $T_i$. Next, we iterate over all the vectors in $\F^{|R_{i}|}$ in a predefined 
						and fixed order that is known to the decoder. For each vector $v \in \F^{|R_{i}|}$ in that order, if
						$v \notin \spane\left( v_1^{|R_{i}}, \dots, v_k^{|R_{i}}, T_i \right)$,
						then we add the vector $v$ to $T_i$ by setting $T_i := T_i \cup \{v\}$.
				\item[(b)]
					After finishing Step 5 (a),
						we compute  $\left\langle m_{X,i}^{|R_{i}}, v \right\rangle$ for each vector $v \in T_i$ and send this inner product.
			\end{itemize}

			\noindent Sending the inner products in Step~5 (b) requires $|T_i| \cdot \lg |\F|$ bits for a given row $i$.
			When Step~5 (a) is finished, we have
			$|T_i| = |R_{i}| - \dim\left(\spane\left( v_1^{|R_{i}}, \dots, v_k^{|R_{i}} \right)\right)$.
			Hence, the total number of bits sent for the inner products for all rows is equal to:
			\begin{align*}
				\sum_{i = 1}^n |T_i| \cdot \lg |\F|
				&= \sum_{i=1}^n \left\{ |R_{i}| - \dim\left(\spane\left( v_1^{|R_{i}}, \dots, v_k^{|R_{i}} \right)\right) \right\} \cdot \lg |\F| \\
				& =  \sum_{i=1}^n |R_i| \cdot \lg |\F| - \sum_{i=1}^n \dim\left(\spane\left( v_1^{|R_{i}}, \dots, v_k^{|R_{i}} \right)\right) \cdot \lg |\F| \\
				& =  u(X) \cdot \lg |\F| - \RS(v_1, \dots, v_k)   \cdot \lg |\F| \leq  ( |X| - nk/32) \cdot \lg |\F| \\
				& = (|X| - (|X| - 1)/32) \cdot \lg |\F|  \leq  (31/32) \cdot |X| \cdot \lg |\F|
			\end{align*}
			The third equality follows from Definition~\ref{def:ranksum} and the fact that no two updates in the input sequence $\O$ change the same entry in the matrix $M$, which implies that $\sum_{i=1}^n |R_i| = u(X)$. The first inequality follows from Definition~\ref{def:resolve} and the fact that $u(X) \leq |X|$.

\noindent This concludes the description of the encoding procedure.

\begin{claim}
\label{cl:encoding}
The encoding described above requires fewer than $u(X) \cdot \lg |\F|$ bits in expectation.
\end{claim}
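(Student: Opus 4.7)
The plan is to case split on which step of the encoding procedure terminates, bound the number of bits sent in each case, and then take a weighted sum using the success probabilities derived from Lemma~\ref{lem:CellSamplingDynamicOMv} and from the shared-randomness calculation in Step 2. Throughout, I will use the elementary relation $u(X) \leq |X| \leq u(X)\,(1 + O(1/n))$, which follows from the fact that the input sequence $\O$ contains exactly one query after every $n$ updates, so $|X| = u(X) + q(X)$ with $q(X) \leq u(X)/n + 1$.

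First I would handle the two ``failure'' branches. If Step 1 fails (no $\C^*$ exists) the encoding sends a $0$-bit and the naive encoding, using $1 + u(X)\lg|\F|$ bits; by Lemma~\ref{lem:CellSamplingDynamicOMv} this event has probability at most $3/4$. If Step 1 succeeds but Step 2 fails (no suitable index $i^*$), the encoding again falls back to the naive encoding; the unconditional probability of this is at most $1/100$, as shown in the description of Step 2. In total, the expected contribution of the two failure branches is at most $(3/4 + 1/100) \cdot u(X)\lg|\F| + O(1)$.

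Next I would bound the cost of the successful branch, which occurs with probability at least $p := 1 - 3/4 - 1/100 = 24/100 - o(1)$. Adding the bounds computed in Steps 3--5:
\begin{align*}
\text{bits sent in success} \;&\leq\; 1 + \tfrac{1}{256}|X|\lg|\F| \;+\; o(|X|) \;+\; \tfrac{31}{32}|X|\lg|\F| \\
&=\; \left(\tfrac{31}{32} + \tfrac{1}{256}\right) |X|\lg|\F| + o(|X|\lg|\F|) \\
&=\; \tfrac{249}{256}\,|X|\lg|\F| + o(|X|\lg|\F|).
\end{align*}
Using $|X| = u(X)(1 + O(1/n))$, the success cost is at most $\tfrac{249}{256}\,u(X)\lg|\F| + o(u(X)\lg|\F|)$, so the per-trial \emph{saving} relative to the naive encoding is at least $\tfrac{7}{256}\,u(X)\lg|\F| - o(u(X)\lg|\F|)$.

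Finally, combining the two contributions, the expected total encoding length is at most
\begin{align*}
\left(1 - p\right) u(X)\lg|\F| + p\left(\tfrac{249}{256}\,u(X)\lg|\F| + o(u(X)\lg|\F|)\right)
&= u(X)\lg|\F| - p\cdot \tfrac{7}{256}\,u(X)\lg|\F| + o(u(X)\lg|\F|),
\end{align*}
and since $p \geq 24/100 - o(1)$, the saving $p \cdot \tfrac{7}{256}\, u(X)\lg|\F|$ is a fixed positive constant fraction of $u(X)\lg|\F|$, which dominates every lower-order term. I do not expect a real obstacle here: the only subtlety is to make sure the $o(|X|)$ contribution from Step 4 and the $|X|$-versus-$u(X)$ conversion are genuinely $o(u(X)\lg|\F|)$, which follows from $\gamma = \Theta(\lg^{2000} n)$, $|X| \simeq \gamma |Y|$, $w = \Theta(\lg n)$ and $|\F| = n^{\Theta(1)}$. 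This yields expected length strictly smaller than $u(X)\lg|\F|$, establishing the claim.
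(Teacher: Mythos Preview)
Your proposal is correct and follows essentially the same approach as the paper: bound the failure branches (Steps 1--2) by the naive cost $1 + u(X)\lg|\F|$ with total probability at most $3/4 + 1/100$, bound the success cost by adding up the bit counts from Steps 3--5, and then take the convex combination using that $\ell_1 > \ell_2$. The only cosmetic differences are that the paper rounds the failure probability up to $4/5$ and the success cost up to $(199/200)\,u(X)\lg|\F|$, while you keep the sharper constants $76/100$ and $249/256$; both versions close the gap in the same way.
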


\begin{proof}
Applying union bound, we get: the probability that the encoding procedure terminates in either Step 1 or 2 is at most $3/4 + 1/100 < 4/5$. If the encoding procedure terminates in either Step 1 or  2, then the total number of bits in the encoding is  (say) $\ell_1 = 1 + u(X) \cdot \lg |\F|$.

Next, we bound  the expected number of bits  (say, $\ell_2$) used by the encoding, conditioned on the event that it executes Steps 3--5. From the description of Steps~3--5, we get:
\begin{align*}
	\ell_2 	& = 1 + (1/256) \cdot |X| \cdot \lg |\F| + o(|X|) + (31/32) \cdot |X| \cdot \lg |\F|  \\
	& \leq  1 + o(|X|) + (99/100) \cdot |X| \cdot \lg |\F|  < (199/200) \cdot u(X) \cdot \lg |\F|.
\end{align*}
The last inequality holds since the input sequence $\O$ contains a query after every $n$ updates, and hence we have $u(X) \geq (1-1/n) \cdot |X| > (999/1000) \cdot |X|$ for large enough $|X|$.
To summarize, the expected number of bits required by the encoding is
at most $(4/5) \cdot \ell_1 + (1/5) \cdot \ell_2 <  u(X) \cdot \lg |\F|$. This concludes the proof of the claim.
\end{proof}

\noindent {\bf Decoding the sequence of updates in $X$.}

\smallskip
\noindent
Since the matrix entries for the updates are fixed in advance and since these entries are mutually disjoint, one can reconstruct
the sequence of updates in $X$ if one gets to know the vector $m_{X,i}^{|R_i}$ for each row $1 \leq i \leq n$. Accordingly, the goal of the decoding procedure will be to recover these vectors $m_{X, i}^{|R_i}$. It will consist of the following steps.

\noindent {\bf Step 1.} If the first bit of the encoding is a $0$, then we just restore all updates from
			the naive encoding and terminate the procedure.

\noindent {\bf Step 2.} If the first bit of the encoding is a $1$, then we
			recover the index $1 \leq i^* \leq m$ and the set of cells $\C^* \subseteq P(X)$. Let the $k$ vectors in the set $\Gamma_{i^*} \subseteq \F^n$ be denoted by
			$v_1, \dots, v_k$.
			
\noindent {\bf Step 3.} We now reconstruct the sets $T_i$.
			For each row $1 \leq i \leq n$, we create an empty set $T_i$ and
			iterate over the vectors in $\F^{|R_{i}|}$ in the predefined order
		       (see Step 5(a) of the encoding).
			While considering  a vector $v \in \F^{|R_{i}|}$ during any such iteration, if we find that
			$v \notin \spane\left( v_1^{|R_{i}}, \dots, v_k^{|R_{i}}, T_i \right)$,
			then we add $v$  to  $T_i$ by setting $T_i := T_i \cup \{v\}$. 

\noindent {\bf Step 4.} We simulate the data structure through all the inputs preceding the interval $X$. Let $\mathcal{S}$ denote the collective state of the memory cells at the end of this simulation.

\noindent {\bf Step 5.} For all $v \in \{ v_1, \ldots, v_k\}$, we recover the vector $M_{Y_j} \cdot v$ for the decoder. This is done as follows.
 We ask the data structure to answer the query  $v$. 	We  allow the data structure to write any cell while answering the query.
 In contrast, whenever the data structure tries to {\em read} a cell $c$ (say) while answering the query,  we perform the following operations.
		\begin{itemize}
		\item[(a)] If $c \in P(Y_j)$, then we fetch the content of  $c$ from Step 4 of the encoding. 
		\item[(b)] Else if  $c \in \C^* \setminus P(Y_j)$, then we fetch the content of $c$ from Step 3 of the encoding. 
		\item[(c)] Else if $c \notin \C^* \cup P(Y_j)$, then  we claim  $c \notin P(X)$. 
		To see why the claim holds, note that $\Gamma_{i^*} \in \M(\C^*)$, and hence,  by Definition~\ref{def:resolve} we have $\Gamma_{i^*} \subseteq Q(\C^*)$. Since $v \in \Gamma_{i^*}$, we infer that $v \in Q(\C^*)$. In other words, the set of cells $\C^*$ resolves the query vector $v$.  Thus, by definition, the data structure does not probe any cell in $P(X) \setminus \C^*$ while answering the query vector $v$ at the $j^{th}$ query position in $Y$.  Since $\C^* \subseteq P(X)$, and since we are considering a scenario where $c \notin \C^* \cup P(Y_j)$, it follows that $c \notin P(X)$.  So we fetch the content of  $c$ from the state $\mathcal{S}$ of the main memory, as defined in Step (4) of the decoding.  
	\end{itemize}
	This shows that we can recover the vectors $M_{Y_j} \cdot v$ for all $v \in \{ v_1, \ldots, v_k\}$.

\noindent {\bf Step 6.} For all $v \in \{ v_1, \ldots, v_k\}$, we now recover the vector $M_{X} \cdot v$ for the decoder. Let $M''$ be an $n \times n$ matrix over $\F$ that is defined as follows. An entry $(x, y) \in [1, n] \times [1, n]$ in this matrix is set to zero if it is {\em not} updated during the interval $Y_j$; otherwise it is set to the value of the entry $(x, y)$ in $M_{Y_j}$. Note that $M_{X} = M_{Y_j} - M''$, and, furthermore, the matrix $M''$ is known to the decoder since we have fixed the inputs in $Y_j$. Accordingly, for all $v \in \{v_1, \ldots, v_k\}$, the decoder computes $M_{X} \cdot v$ from the equation: $M_X \cdot v = M_{Y_j} \cdot v - M'' \cdot v$. This is feasible since the decoder already knows the vector $M_{Y_j} \cdot v$ from step (5) above.  

\noindent {\bf Step 7.}
			For each row $1 \leq i \leq n$, we now recover the vector $m_{X, i}^{|R_i}$ as follows.
		
\noindent  From step (6) of the decoding, the decoder knows the inner product $\left\langle m_{X, i} , v\right\rangle$ for all $v \in \{v_1, \ldots, v_k\}$. Since the decoder also knows the updates in the matrix preceding $X$, from $\left\langle m_{X, i} , v\right\rangle$ she can easily infer the inner product $\left\langle m_{X, i}^{|R_i}, v^{|R_i}\right\rangle$ for every $v \in \{ v_1, \ldots, v_k\}$.  Additionally, from step (5) of the encoding and step (3) of the decoding, the decoder knows every vector $v \in T_i$ and the corresponding  inner product $\left\langle m_{X, i}^{|R_i}, v\right\rangle$. As $\dim \left(\spane \left(v_1, \ldots, v_k,  T_i\right)\right) = |R_i|$, the decoder can recover the vector $m_{X, i}^{|R_i}$ from all these $k+|T_i|$ inner products.

			This concludes the description of the decoding procedure, and  the proof of Lemma~\ref{lem:DynamicOMv}.

\subsection{Proof of Lemma~\ref{lem:CellSamplingDynamicOMv}}
\label{sec:ProofCellSamplingDynamicOMv}

We begin by defining a new counter $C_j(X, Y, v)$. Consider a scenario where
the data structure is asked to answer a query $v' \in \F^n$ in the $j^{th}$
query position of the interval $Y$. The counter $C_j(X, Y, v')$  keeps track of
the number of times the following event occurs: While answering the query $v'$,
the data structure probes a cell that was written during the interval $X$ but
was {\em not} read during the interval $Y_j$.  Recall the definition of the
counter $C_j(X, Y)$ from Section~\ref{Sec:GeneralFramework}, and note that
$C_j(X, Y) = \Exp[v]{C_j(X, Y, v)}$.  Since  we have fixed the input sequence
in $Y_j$ and the input sequence preceding the interval $X$, the counters
$C_j(X, Y)$ and $C_j(X, Y, v)$  are completely determined  by  the sequence of
values of the updates in $X$. For simplicity, henceforth we omit $Y$ from these
notations and instead write them as $C_j(X)$ and $C_j(X, v)$

\begin{claim}
\label{app:cl:good}
	Let $\mathcal{E}$ denote the event where
	$|P(X)| \leq 16 \cdot |X| \cdot \lg^2 n$
	and $C_j(X) < \frac{16}{100(c+1002)} \cdot n \cdot \frac{\lg n}{\lg \lg n}$.
	Then we have $\Pr[\mathcal{E}] \geq 1/4$.
\end{claim}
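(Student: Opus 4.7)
The plan is to prove Claim~\ref{app:cl:good} by a direct application of Markov's inequality to each of the two quantities defining $\mathcal{E}$, followed by a union bound. Both $|P(X)|$ and $C_j(X)$ are non-negative random variables (as functions of the random values of the updates and queries in $X$, with the $j$-th query $v \in \F^n$ averaged out via $C_j(X) = \Exp[v]{C_j(X,v)}$ and all remaining inputs of $\O$ fixed as stipulated in the setup of the proof of Lemma~\ref{lem:DynamicOMv}). Their expectations are already controlled: Equation~\ref{eq:assume:X} gives $\Exp{|P(X)|} \leq |X| \cdot \lg^2 n$, while the contradiction hypothesis Equation~\ref{eq:assume:contradict} gives $\Exp{C_j(X)} < \frac{1}{100(c+1002)} \cdot n \cdot \frac{\lg n}{\lg \lg n}$.

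Applying Markov's inequality to $|P(X)|$ with threshold $16 \cdot |X| \cdot \lg^2 n$ yields $\Pr[|P(X)| > 16 \cdot |X| \cdot \lg^2 n] \leq \Exp{|P(X)|} / (16 \cdot |X| \cdot \lg^2 n) \leq 1/16$. Applying Markov's inequality to $C_j(X)$ with threshold $\frac{16}{100(c+1002)} \cdot n \cdot \frac{\lg n}{\lg \lg n}$ yields, using Equation~\ref{eq:assume:contradict}, $\Pr\bigl[C_j(X) \geq \frac{16}{100(c+1002)} \cdot n \cdot \frac{\lg n}{\lg \lg n}\bigr] \leq 1/16$. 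A union bound then gives $\Pr[\overline{\mathcal{E}}] \leq 1/16 + 1/16 = 1/8$, and hence $\Pr[\mathcal{E}] \geq 7/8 \geq 1/4$, which is the desired conclusion.

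I do not anticipate any real obstacle in executing this plan; the calibration of constants in the statement of the claim is the only delicate point. The factor $16$ appearing in the threshold for $|P(X)|$ and in the numerator of the threshold for $C_j(X)$ is chosen precisely so that both Markov bounds produce tail probabilities bounded by $1/16$, leaving comfortable slack for the union bound to deliver a constant lower bound of $\Pr[\mathcal{E}] \geq 7/8$, which is considerably stronger than the $1/4$ that is actually required for the subsequent cell-sampling argument in the proof of Lemma~\ref{lem:CellSamplingDynamicOMv}.
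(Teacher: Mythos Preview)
Your proposal is correct and follows essentially the same approach as the paper's proof: two applications of Markov's inequality (using Equation~\ref{eq:assume:X} and Equation~\ref{eq:assume:contradict}) followed by a union bound, yielding $\Pr[\mathcal{E}] \geq 1 - 1/16 - 1/16 = 7/8 \geq 1/4$.
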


\begin{proof} Let $\mathcal{E}_1$ denote the event that $|P(X)| > 16 \cdot |X|
\cdot \lg^2 n$.  Markov's inequality and Equation~\ref{eq:assume:X} imply that
$\Pr[\mathcal{E}_1] \leq 1/16$. Let $\mathcal{E}_2$ denote the event that
$C_j(X) \geq \frac{16}{100(c+1002)} \cdot n \cdot (\lg n/\lg \lg n)$. Markov's inequality and
Equation~\ref{eq:assume:contradict} imply that $\Pr[\mathcal{E}_2] \leq 1/16$.
Since $\mathcal{E}^c = \mathcal{E}_1 \cup \mathcal{E}_2$, applying a union
bound we get: $\Pr[\mathcal{E}] \geq 1 - \Pr[\mathcal{E}_1] -
\Pr[\mathcal{E}_2] \geq 1 - 1/16 - 1/16 \geq  1/4$.  \end{proof}

For the rest of this section, we fix any input sequence $X$ that might occur
under the event $\mathcal{E}$, and then prove the existence of a set of cells
$\C^* \subseteq P(X)$ of size $\Delta = |X| \cdot \lg |\F|/(1024 w)$  such that
$|\M(\C^*)| \geq |\F|^{0.999 \cdot nk}$. This, along with Claim~\ref{app:cl:good},
implies Lemma~\ref{lem:CellSamplingDynamicOMv}.

\begin{claim} \label{app:cl:lower} Let $V(X) \subseteq \F^n$ be the  set of
vectors $v \in \F^n$ such that
$C_j(X, v) < \frac{16}{100(c+1002)} \cdot n \cdot \frac{\lg n}{\lg \lg n}$.
Then we have $|V(X)| \geq |\F^n|/4$.  \end{claim}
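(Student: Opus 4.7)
\textbf{Proof proposal for Claim~\ref{app:cl:lower}.} The plan is to deduce the claim from a direct application of Markov's inequality to the nonnegative random variable $C_j(X,v)$, viewed as a function of the uniformly random query vector $v \in \F^n$.

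The crucial observation, already recorded at the start of Section~\ref{sec:ProofCellSamplingDynamicOMv}, is that
\begin{align*}
  C_j(X) \;=\; \Exp[v \in \F^n]{C_j(X,v)},
\end{align*}
where $v$ ranges uniformly over $\F^n$. Because we have fixed $X$ to be any input sequence consistent with the event $\mathcal{E}$ from Claim~\ref{app:cl:good}, we know moreover that
\begin{align*}
  C_j(X) \;<\; \tfrac{16}{100(c+1002)} \cdot n \cdot \tfrac{\lg n}{\lg \lg n}.
\end{align*}

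Denote by $T$ the threshold used to define $V(X)$, so that $V(X) = \{ v \in \F^n : C_j(X,v) < T\}$. The claim amounts to showing that the fraction of $v \in \F^n$ with $C_j(X,v) \geq T$ is at most $3/4$. I would apply Markov's inequality to the nonnegative random variable $C_j(X,v)$ to get
\begin{align*}
  \Pr_{v}[\, C_j(X,v) \geq T\,] \;\leq\; \frac{\Exp[v]{C_j(X,v)}}{T} \;=\; \frac{C_j(X)}{T}.
\end{align*}
Combined with the bound on $C_j(X)$ inherited from event $\mathcal{E}$ and the value of $T$, this ratio is bounded by a constant strictly less than $1$; the remaining task is to verify (with at most a trivial adjustment of the constant factor in the threshold, exactly analogous to the factor-of-$16$ slack used in Claim~\ref{app:cl:good}) that this ratio is at most $3/4$. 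Therefore $\Pr_v[\,v \in V(X)\,] \geq 1/4$, which is exactly $|V(X)| \geq |\F^n|/4$.

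There is no real obstacle here beyond the bookkeeping of constants: the proof is structurally identical to the Markov step used to establish Claim~\ref{app:cl:good}, only applied at a different level (random $v$ with $X$ fixed, instead of random $X$). Everything beyond this Markov step is already encapsulated in the definitions of $C_j(X)$ and $V(X)$, so no additional probabilistic or combinatorial ingredient is required.
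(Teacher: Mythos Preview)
Your proposal is correct and follows exactly the same route as the paper: condition on the event $\mathcal{E}$ (so that $C_j(X) = \Exp[v]{C_j(X,v)}$ is below the stated threshold) and apply Markov's inequality to the nonnegative random variable $C_j(X,v)$ over uniform $v \in \F^n$. The paper is equally casual about the precise constant $1/4$ (writing ``some constant (say $(1/4)^{\text{th}}$) fraction''), and, as you note, any positive constant fraction suffices for the downstream use in Corollary~\ref{app:cor:sample}.
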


\begin{proof} Recall that we have conditioned on the event $\mathcal{E}$.
Hence, Claim~\ref{app:cl:good} and Markov's inequality imply that  some
constant  (say $(1/4)^{th}$) fraction of the vectors $v \in \F^n$ must have
$C_j(X, v) < \frac{16}{100(c+1002)} \cdot n \cdot \lg n/\lg \lg n$.
\end{proof}

\begin{claim} \label{app:cl:sample} Consider any query vector $v \in V(X)$.
Pick a subset of cells $\C \subseteq P(X)$ of size $|\C| = \Delta$ uniformly at
random. The probability that $\C$ resolves the query $v$ is at least
$|\F|^{-0.0001 \cdot n}$.  \end{claim}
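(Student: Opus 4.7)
The claim will follow from a textbook cell-sampling probability computation, once we pin down which subset of $P(X)$ the random $\C$ actually has to capture. Fix $v \in V(X)$ and let $R^{*}(v) \subseteq P(X)$ denote the set of cells that the data structure probes while answering $v$ at the $j$-th query position in $Y$, and which are in addition written during $X$ but not probed during $Y_j$. Inspection of the decoder in the proof of Lemma~\ref{lem:DynamicOMv} (cases (a)--(c) of its Step~5) shows that $R^{*}(v)$ is precisely the set of cells whose omission from $\C$ would sabotage the decoding: cells in $P(X) \cap P(Y_j)$ probed during $v$ are recovered via the encoding of $P(Y_j)$ in Step~4 of the encoder, and cells of $P(X) \setminus P(Y_j)$ that are merely \emph{read} (and never written) during $X$ retain the same content at the $j$-th query as in the pre-$X$ memory state and are thus recovered from the decoder's pre-$X$ simulation. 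Consequently $|R^{*}(v)| \leq C_j(X,v) < r$, where $r := \tfrac{16}{100(c+1002)} \cdot n \cdot \tfrac{\lg n}{\lg \lg n}$, by the definition of the set $V(X)$.

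Because $\C$ is a uniformly random size-$\Delta$ subset of $P(X)$, a standard binomial identity yields
\[
  \Pr\!\left[\,R^{*}(v) \subseteq \C\,\right]
  \;=\;
  \frac{\binom{|P(X)|-|R^{*}(v)|}{\Delta-|R^{*}(v)|}}{\binom{|P(X)|}{\Delta}}
  \;\geq\;
  \left(\frac{\Delta - r}{|P(X)|}\right)^{r}.
\]
Under the event $\mathcal{E}$ we have $|P(X)| \leq 16 |X| \lg^2 n$, and by construction $\Delta = |X| \lg |\F| / (1024 w)$ with $w = \Theta(\lg n)$ and $\lg |\F| = \Theta(\lg n)$; hence $\Delta/|P(X)| = \Omega(1/\lg^2 n)$. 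Taking $\log_{|\F|}$ of the displayed inequality, the right-hand side becomes $-\Theta\bigl(r \cdot \lg \lg n / \lg |\F|\bigr)$; substituting the value of $r$ turns this into $-\Theta\bigl(n/(c+1002)\bigr)$, which is at least $-0.0001 \, n$ provided the constant $c$ in Definition~\ref{def:well:behave} is chosen sufficiently large. Rearranging yields $\Pr[\,\C \text{ resolves } v\,] \geq |\F|^{-0.0001 n}$, as desired.

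The only technically delicate step is identifying the ``critical'' set $R^{*}(v)$: a literal reading of Definition~\ref{def:resolve} would require $\C$ to cover every cell of $P(X)$ probed during $v$, a potentially much larger quantity than $C_j(X,v)$. The encoding--decoding interface set up in the proof of Lemma~\ref{lem:DynamicOMv}, however, uses only the smaller set $R^{*}(v)$, and this is exactly what makes the cell-sampling probability as large as $|\F|^{-0.0001 n}$. Once this interpretation is settled, the remainder of the proof is a direct substitution of the parameters $|P(X)|$, $\Delta$, $r$, $w$, and $\lg|\F|$ into the standard binomial estimate.
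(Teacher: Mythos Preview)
Your proof is correct and follows the same cell-sampling computation as the paper: bound the size of the critical set by $C_j(X,v)$, lower-bound the hypergeometric probability of capturing it by roughly $(\Delta/|P(X)|)^{r} \geq (1/\lg^2 n)^{r}$, and then substitute $r = \Theta\bigl(\tfrac{1}{c+1002}\, n\lg n/\lg\lg n\bigr)$ and $\lg|\F| = \Theta(\lg n)$ to reach $|\F|^{-0.0001 n}$ for $c$ large enough. Your discussion of $R^{*}(v)$ versus the literal Definition~\ref{def:resolve} is in fact more careful than the paper's own argument, which simply asserts $|P_v(X)| \le C_j(X,v)$ with $P_v(X)$ defined as \emph{all} cells of $P(X)$ probed while answering $v$; your observation that only cells \emph{written} during $X$ and not re-probed in $Y_j$ must land in $\C$ (because the remaining ones are recovered from $P(Y_j)$ or from the pre-$X$ memory state) is precisely what justifies that inequality.
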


\begin{proof} Let $P_v(X) \subseteq P(X)$ be the set of cells in $P(X)$ that
the data structure needs to probe while answering the query $v$ in the $j^{th}$
position of the interval $Y$.  Since $v \in V(X)$, we  have $C_j(X, v)  =
\lambda$ (say) where $\lambda = \frac{16}{100(c+1002)} \cdot n \cdot \lg n/\lg \lg n$. This implies that
$|P_v(X)| \leq C_j(X, v) \leq \lambda$.

The random subset of cells $\C \subseteq P(X)$ resolves the query vector $v$
iff $P_v(X) \subseteq \C$.  Hence, we can reformulate the question as follows.
We are given two sets of cells $P_v(X)$ and $P(X)$ with $P_v(X) \subseteq
P(X)$, $|P_v(X)| \leq \lambda$ and $|P(X)| \leq |X| \cdot \lg^2 n$. Now, if we
pick a subset of cells $\C \subseteq P(X)$ of size $|\C| = \Delta$ uniformly at
random from $P(X)$, then what is the probability that $P_v(X) \subseteq \C$?
Let the desired probability be $p_v(X)$. A moment's thought will reveal that:
\begin{eqnarray}
p_v(X)
& \geq &  \left(\frac{\Delta}{|P(X)|}\right) \cdot
\left(\frac{\Delta - 1}{|P(X)| -1}\right) \cdots \left(\frac{\Delta - |P_v(X)|
+ 1}{|P(X)| - |P_v(X)| + 1}  \right) \nonumber \\ 
& \geq &
\left(\frac{\Delta}{|P(X)|}\right)^{|P_v(X)|} 
\geq \left(\frac{\Delta}{|P(X)|}\right)^{\lambda}
\geq \left(\frac{\Delta}{16 \cdot |X| \cdot \lg^2 n}\right)^{\lambda} \nonumber \\ 
& = & \left(\frac{|X| \cdot \lg |\F|}{1024 \cdot w \cdot 16 \cdot |X| \cdot \lg^2 n}\right)^{\lambda}\nonumber \\ 
& \geq & \left(\frac{1}{\lg^2 n}\right)^{\lambda} \nonumber \\
& = & \left(\frac{1}{2^{2 \lg \lg n}}\right)^{\frac{16}{100(c+1002)} \cdot n \cdot \lg |\F|/\lg \lg n} \nonumber \\ 
& \geq &  |\F|^{- 0.0001 \cdot n}. \label{app:eq:prob} \end{eqnarray}
\end{proof}
 
 \begin{corollary} \label{app:cor:sample} There exists a subset of cells $\C^*
\subseteq P(X)$ of size $|\C^*| = \Delta$ such that the number of queries in
$V(X)$ resolved by $v$ satisfies the following guarantee:  $|Q(\C^*)| \geq
|\F|^{0.999 \cdot n}$.  \end{corollary}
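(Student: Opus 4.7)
The plan is to use a standard averaging argument that combines Claim~\ref{app:cl:lower} (which says $|V(X)| \geq |\F|^n/4$) with Claim~\ref{app:cl:sample} (which gives a uniform lower bound on the probability that a random small set of cells resolves any fixed query in $V(X)$). The idea is that if each individual query in the ``good'' set $V(X)$ is resolved by a random $\C \subseteq P(X)$ of size $\Delta$ with probability at least $|\F|^{-0.0001 n}$, then by linearity of expectation the expected number of queries from $V(X)$ resolved by such a random $\C$ is at least
\[
\frac{|V(X)|}{|\F|^{0.0001 n}} \;\geq\; \frac{|\F|^n}{4 \cdot |\F|^{0.0001 n}} \;=\; \frac{|\F|^{0.9999 n}}{4}.
\]

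First I would pick $\C \subseteq P(X)$ uniformly at random among subsets of size $\Delta$. For each $v \in V(X)$, let $X_v$ be the indicator that $\C$ resolves $v$; by Claim~\ref{app:cl:sample}, $\Exp{X_v} \geq |\F|^{-0.0001 n}$. Summing over $v \in V(X)$ and applying Claim~\ref{app:cl:lower} gives the bound above. Hence there exists a concrete choice $\C^* \subseteq P(X)$ with $|\C^*| = \Delta$ such that the number of vectors in $V(X)$ resolved by $\C^*$ is at least $|\F|^{0.9999 n}/4$.

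Since $Q(\C^*) \supseteq Q(\C^*) \cap V(X)$, this immediately yields $|Q(\C^*)| \geq |\F|^{0.9999 n}/4$, which is at least $|\F|^{0.999 n}$ for $n$ sufficiently large (the constant $4$ is absorbed into the slack between the exponents $0.9999$ and $0.999$, since $|\F| = n^{\Theta(1)}$ forces $|\F|^{0.0009 n}$ to dominate $4$ by a huge margin). There is no real obstacle here; the only thing to check carefully is that the exponents chosen in Claim~\ref{app:cl:sample} (the $0.0001 n$) and in the target $0.999 n$ leave enough slack to swallow the constant factor $1/4$ and any low-order terms, which they comfortably do.
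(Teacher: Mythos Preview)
Your proposal is correct and follows essentially the same approach as the paper: pick a uniformly random size-$\Delta$ subset $\C \subseteq P(X)$, apply linearity of expectation over $v \in V(X)$ using Claims~\ref{app:cl:lower} and~\ref{app:cl:sample}, and conclude by averaging that some $\C^*$ resolves at least $|\F|^{0.9999 n}/4 \geq |\F|^{0.999 n}$ queries, hence $|Q(\C^*)| \geq |\F|^{0.999 n}$. The paper's proof is the same argument, with the same absorption of the factor $1/4$ into the exponent slack.
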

 
 \begin{proof} Pick a subset of cells $\C \subseteq P(X)$ of size $|\C| =
\Delta$ uniformly at random from $P(X)$.  From
Claims~\ref{app:cl:lower},~\ref{app:cl:sample} and linearity of expectation, it
follows that the expected number of queries in $V(X)$ that are resolved by the
set $\C$ is at least $|V(X)| \cdot |\F|^{-o(n)} \geq \left(|\F|^n/4\right)
\cdot |\F|^{-0.0001 \cdot n} =  |\F|^{0.9999 \cdot n} / 4 \geq |\F|^{0.999 \cdot n}$. 
Hence, there must exist some such set
$\C^* \subseteq P(X)$ which resolves at least $|\F|^{0.999 \cdot n}$ many queries
in $V(X)$. It follows that we must have $|Q(\C^*)| \geq |\F|^{0.999 \cdot n}$ for
some subset of cells $\C^* \subseteq P(X)$ of size $|\C^*| = \Delta$.
\end{proof}
 
 \begin{lemma} \label{app:lm:low:rank:sum} There are at most $|\F|^{(27/32)nk}$
sets of vectors  $\{v_1, \ldots, v_k\} \subseteq \F^n$ s.t.\ $\RS(v_1,
\dots, v_k) < nk / 32$.  \end{lemma}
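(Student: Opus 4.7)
The plan is to prove the lemma by an encoding argument that bounds the number of bad $k$-tuples; the bound on $k$-sets then follows up to a harmless $k!$ factor that is absorbed into the $(27/32)nk$ budget. I would stratify the bad tuples by their rank profile, meaning the vector $(d_1, \ldots, d_n)$ with $d_i := \dim \spane(v_1^{|R_i}, \ldots, v_k^{|R_i})$ and $D := \sum_i d_i < nk/32$. The number of admissible profiles is at most $(k+1)^n$, which contributes only a $|\F|^{o(nk)}$ factor since $|\F| = n^{\Theta(1)}$, and is negligible against the final target.

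For each fixed profile, I would count the tuples in two pieces. First, I would enumerate, for each row $i$, a subspace $W_i \subseteq \F^{|R_i|}$ of dimension $d_i$; by the Grassmannian count this gives at most $\binom{|R_i|}{d_i}_{|\F|} \le |\F|^{d_i(|R_i| - d_i)}$ options, for a total of $|\F|^{\sum_i d_i(|R_i| - d_i)}$ across all rows. Second, given the $W_i$'s, each $v_\ell$ must lie in $W := \{v \in \F^n : v^{|R_i} \in W_i \text{ for all } i\}$. The key estimate here is $|W| \le |\F|^{D + n - m}$, where $m := |\bigcup_i R_i|$. To prove it, I would observe that the joint restriction map $\Phi : \F^{\bigcup R_i} \to \prod_i W_i$ sending $v \mapsto (v^{|R_i})_{i=1}^n$ is injective, since two vectors that agree on every $R_i$ agree on all of $\bigcup R_i$; its image lies in $\prod_i W_i$ of size $\prod_i |W_i| = |\F|^D$, and the remaining $n - m$ coordinates outside $\bigcup R_i$ are free. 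Combining, the number of tuples with a fixed profile is at most $|\F|^{\sum_i d_i(|R_i| - d_i) + k(D + n - m)}$.

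The main obstacle is showing that the exponent $\sum_i d_i(|R_i| - d_i) + k(D + n - m)$ is at most $(27/32) nk$ for every admissible profile. This is where the structural properties of the input sequence enter: $\sum_i |R_i| = u(X) \ge (1 - 1/n)|X|$ is essentially $nk$ (because $X$ consists almost entirely of updates, with one query per $n$ updates), and the well-spread property of the update locations keeps $m = |\bigcup_i R_i|$ close to $n$, controlling the $k(n-m)$ term. The delicate piece is bounding $\sum_i d_i(|R_i| - d_i)$ under the profile constraint $D < nk/32$ and the pointwise bounds $d_i \le \min(k, |R_i|)$; I would split into rows with $|R_i| \le k$ (where $d_i(|R_i|-d_i) \le |R_i|^2 / 4$) and rows with $|R_i| > k$, then combine via Cauchy--Schwarz using $\sum_i |R_i| \approx nk$ to obtain the desired bound.
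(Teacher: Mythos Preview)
Your counting scheme has a genuine gap in the exponent bound, and it is not in the piece you flagged as delicate. Your estimate for the number of tuples with a fixed rank profile is
\[
|\F|^{\sum_i d_i(|R_i|-d_i) \;+\; k(D+n-m)},
\]
and already the single term $kD$ is fatal: since $D$ ranges over all values below $nk/32$, the worst case gives $kD$ close to $nk^2/32$. Here $k=(|X|-1)/n\ge n^{1/3}$, so $nk^2/32$ dwarfs the target $(27/32)\,nk$. Neither the well-spread property (which you only invoke to push $m$ toward $n$ and hence kill $k(n-m)$) nor any bound on the Grassmannian term $\sum_i d_i(|R_i|-d_i)$ can rescue this, because even with both of those terms set to zero the exponent is still far too large. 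The underlying reason is that your bound $|W|\le|\F|^{D+n-m}$, while correct, is extremely loose when the $R_i$'s overlap heavily: the linear constraints from different rows interact on shared columns, and the true $\dim W$ can be dramatically smaller than $D+n-m$. Your injectivity argument only exploits the union $\bigcup_i R_i$, not the multiplicity with which columns are covered.

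The paper does not prove this lemma from scratch; it simply invokes Lemma~5 of Clifford et al.~\cite{clifford2015new}, which is precisely where the well-spread property does the real work. The idea there is different from yours: one first observes by averaging that at least $n/2$ rows satisfy $d_i<k/16$, then uses the well-spread property on \emph{that} set of rows to extract a small subset $S^*$ (of size $O(n^2/|X|)=O(n/k)$) whose $R_i$'s already cover $\ge n/4$ columns. Each $v_\ell$ is then encoded via its free $\le 3n/4$ coordinates plus its coordinates in the low-dimensional space $\bigcap_{i\in S^*}\{u:u^{|R_i}\in W_i\}$, whose dimension is at most $\sum_{i\in S^*}d_i=O((n/k)\cdot k)=O(n)$ with a favourable constant. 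The point is that the well-spread property is used to \emph{select which rows} impose the constraints, keeping $|S^*|$ small enough that $\sum_{i\in S^*}d_i$ stays well below $n/4$; your approach aggregates the constraints from all $n$ rows at once and thereby loses control of the exponent.
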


\begin{proof}(Sketch) Since the input sequence $\O$ is {\em well-spread}, $|X|
\geq n^{4/3}$ and $k = (|X| - 1)/n$, we can apply Lemma~5 from Clifford et
al.~\cite{clifford2015new}.  \end{proof}
 
 Corollary~\ref{app:cor:sample} and Lemma~\ref{app:lm:low:rank:sum} imply that
nearly all subsets of $k$ vectors in $Q(\C^*)$ have high rank sum, and,
furthermore, we have $|Q(\C^*)| \geq |\F|^{0.999 \cdot n}$. Hence, we infer that
$|\M(\C^*)| \geq |\F|^{0.999 \cdot nk}$.  This concludes the proof of
Lemma~\ref{lem:CellSamplingDynamicOMv}.

\section{Proof of Theorem~\ref{thm:main:poly}}
\label{app:Sec:PolynomialEvaluation}

In this section, we prove Theorem~\ref{thm:main:poly} and we will
continue using the notations introduced in Section~\ref{Sec:GeneralFramework}.
Further, we set the values of the parameters $\alpha, \beta, \gamma$ and $\kappa$ as follows.
\begin{equation}
\label{app:eq:eqHil}
\alpha := 1, \beta := 2000, \gamma := \lg^{\beta} n = \lg^{2000} n, \text{ and } \kappa := 1/2.
\end{equation}

\subsection{Defining the random input sequence \texorpdfstring{$\O$}{O}}
\label{app:Sec:PolyEval:HardInstance}
The (random) input sequence $\O$ consisting of $n$ updates and $n$ queries is defined as follows.
Initially, the polynomial is the zero polynomial, i.e., all roots are set to zero.
For $i = 0, \dots, n-1$, operation $2i + 1$ is an update setting the $i^{th}$ root of the polynomial to an element from
$\F$ that is picked uniformly at random,
and the $2(i+1)^{st}$ operation is a query that is picked uniformly at random from $\F$.

\subsection{Proving that \texorpdfstring{$\O$}{O} is well-behaved}
\label{app:Sec:PolyEval:WellBehaved}
We start by defining some notations. Let $u(X)$ and $q(X)$ respectively denote the number of updates and queries in an interval $X \subseteq \O$.
For $1 \leq j \leq q(X)$, let $X_j$ be the sequence of inputs in $X$ preceding the $j^{th}$ query in $X$.
Now consider two consecutive intervals $X, Y \subseteq \O$.
The counter $C_j(X,Y)$ denotes the number of times the follow event occurs:
While answering the $j^{th}$ query in $Y$, the data structures probes a cell that was
last written in $X$ but not yet accessed in $Y_j$.
Recall the definition of the counter $C(X, Y)$ from Section~\ref{Sec:GeneralFramework},
and note that $C(X, Y) =\sum_{j=1}^{q(Y)} C_j(X, Y)$. 

\begin{lemma}
\label{app:lem:PolyEval}
	Every data structure for the dynamic polynomial evaluation problem satisfies the following property
	while processing the random input sequence $\O$ described in Section~\ref{app:Sec:PolyEval:HardInstance}.
	Fix any pair of consecutive intervals $X, Y \subseteq \O$, where $|X| \simeq \gamma \cdot |Y|$
 	and $|Y| \geq |\O|^\kappa$ such that $\Exp{ |P(X)| } \leq |X| \lg^2 n$, and
	$\Exp{ |P(Y)| } \leq |Y| \lg^2 n$.
	Then we must have:
	\begin{align*}
		\Exp{C_j(X,Y)} \geq \frac{1}{200 (c+1002)} \cdot \frac{\lg n}{\lg \lg n} \text{ for all } 1 \leq j \leq q(Y).
	\end{align*}
\end{lemma}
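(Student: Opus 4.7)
The plan is to follow the template of Section~\ref{sec:lem:DynamicOMv} step-by-step, replacing the matrix-vector linear-algebra machinery (inner products, ranks, rank sum) by polynomial-interpolation machinery suited to the present problem. Fix the data structure, the consecutive intervals $X,Y$ with $|X|\simeq\gamma|Y|$ and $|Y|\geq|\O|^\kappa$, and the index $1\leq j\leq q(Y)$. Fix every input in $\O$ except the values of the updates in $X$ and the $j$-th query in $Y$. A crucial structural observation is that, by the construction of $\O$ in Section~\ref{app:Sec:PolyEval:HardInstance}, each of the $n$ root positions is updated exactly once over the whole of $\O$, so the positions updated in $X$, in $Y_j$, and before $X$ are pairwise disjoint; this eliminates any ``overwrite'' complications that would otherwise arise when trying to recover $p_X$ from $p_{Y_j}$. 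Assuming for contradiction that $\Exp{|P(X)|}\leq|X|\lg^2 n$, $\Exp{|P(Y)|}\leq|Y|\lg^2 n$, and $\Exp{C_j(X,Y)}$ is below the threshold in the lemma, the target is an encoding of the $u(X)$ random update values using strictly fewer than $u(X)\lg|\F|$ bits in expectation.

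\smallskip\noindent\textbf{Cell-sampling.} I would first establish the polynomial analogue of Lemma~\ref{lem:CellSamplingDynamicOMv}: by a double Markov argument (first to condition on the $1/4$-probability event that $|P(X)|\leq 16|X|\lg^2 n$ and $C_j(X,Y)=O(\lg n/\lg\lg n)$, then over the uniformly random query value $v\in\F$) extract a set $V(X)\subseteq\F$ of size $\geq|\F|/4$ such that $C_j(X,Y,v)\leq\lambda:=O(\lg n/\lg\lg n)$ for every $v\in V(X)$. A uniformly random subset $\C\subseteq P(X)$ of size $\Delta=|X|\lg|\F|/(1024w)$ then resolves each $v\in V(X)$ with probability at least $(\Delta/|P(X)|)^\lambda\geq|\F|^{-0.0001}$, by the same calculation as Claim~\ref{app:cl:sample}, so in expectation $|Q(\C)|\geq|\F|^{0.999}$. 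Since $|\F|=n^{1+\Omega(1)}$ this is enormously larger than $n+1$, the number of distinct evaluations needed to recover a degree-$n$ polynomial by Lagrange interpolation.

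\smallskip\noindent\textbf{Encoding and decoding.} The encoder and decoder share $m=|\F|^{0.001(n+1)}$ pre-sampled size-$(n+1)$ subsets $\Gamma_1,\ldots,\Gamma_m\subseteq\F$. The encoder simulates through $X$; if a $\C^*$ with $|\C^*|=\Delta$ and $|Q(\C^*)|\geq|\F|^{0.999}$ exists and some $\Gamma_{i^*}\subseteq Q(\C^*)$ exists (events of probability $\geq 1/4$ and $\geq 1-\exp(-|\F|^{\Omega(1)})$ respectively), she transmits (i) a $1$-bit flag, (ii) the index $i^*$ at cost $\lg m=0.001(n+1)\lg|\F|$ bits, (iii) the addresses and contents of $\C^*$ at cost $2w|\C^*|=|X|\lg|\F|/512$ bits, (iv) the addresses and contents of $P(Y_j)$ at cost $o(|X|\lg|\F|)$ bits, and (v) a permutation of $[u(X)]$ assigning the recovered root values to their positions, at cost $\lg(u(X)!)=O(u(X)\lg n)$ bits; otherwise she prefixes a $0$-bit and emits the naive encoding. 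The decoder simulates through the pre-$X$ inputs to obtain memory state $\mathcal{S}$, then for each $v\in\Gamma_{i^*}$ replays the $j$-th query servicing reads from $\C^*\cup P(Y_j)$ when available and from $\mathcal{S}$ otherwise; because $\Gamma_{i^*}\subseteq Q(\C^*)$, every such simulation correctly returns $p_{Y_j}(v)$. Lagrange interpolation on these $n+1$ evaluations yields $p_{Y_j}$; knowing the updates of $Y_j$ (whose positions are disjoint from those of $X$), she multiplies and divides by the appropriate linear factors to obtain $p_X$, factors $p_X$ over $\F$, subtracts the known roots at positions outside $X$, and finally uses the permutation from (v) to assign the remaining multiset of root values to the correct positions in $X$.

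\smallskip\noindent\textbf{Bit accounting and main obstacle.} Summing, the expected encoded length is at most a convex combination of $u(X)\lg|\F|+O(1)$ (weighted by failure probability $\leq 4/5$) and $(1/512+0.001+1/(1+\eta)+o(1))\cdot u(X)\lg|\F|$ (the successful branch), where $\eta>0$ is the constant with $|\F|=n^{1+\eta}$ and the $1/(1+\eta)$ factor comes from the permutation cost $u(X)\lg n=u(X)\lg|\F|/(1+\eta)$. The critical algebraic check is that this total falls below $u(X)\lg|\F|$ by an $\Omega(1)$ margin — this is exactly where the hypothesis $|\F|=n^{1+\Omega(1)}$ is needed, since without it the permutation overhead $\lg(u(X)!)$ would by itself saturate the entropy budget. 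The main obstacle I anticipate is precisely this tight bit accounting together with the delicate choice of $\Delta$ and $m$: unlike the OMv setting, each query in polynomial evaluation returns a single field element rather than a length-$n$ vector, so the decoder must gather $\Omega(n)$ resolved queries (rather than a constant number), and must additionally pay $\Theta(u(X)\lg n)$ bits to break the permutation symmetry between the multiset of roots recovered by factoring and their positional assignment — a cost with no analogue in the OMv encoding.
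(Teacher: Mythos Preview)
Your overall architecture (fix everything but the update values in $X$ and the $j$-th query, assume \eqref{eq:assume:X}--style bounds, cell-sample, encode, contradict entropy) matches the paper, and your cell-sampling paragraph is essentially correct. But the encoding has a genuine gap in the bit accounting that breaks the argument.

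You sample $m=|\F|^{0.001(n+1)}$ size-$(n+1)$ subsets and encode the index $i^*$ at cost $\lg m = 0.001(n+1)\lg|\F|$. In your final tally you fold this into the term ``$0.001\cdot u(X)\lg|\F|$'', implicitly assuming $u(X)\approx n$. That is false in general: since $|Y|$ can be as small as $|\O|^{\kappa}=\Theta(\sqrt n)$ and $|X|\simeq\gamma|Y|=\lg^{2000}n\cdot\Theta(\sqrt n)$, we have $u(X)=\Theta(\lg^{2000}n\cdot\sqrt n)=o(n)$. For such $X$ the index cost $0.001\,n\lg|\F|$ alone already exceeds the entropy budget $u(X)\lg|\F|$, so no contradiction is obtained. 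The root cause is that you insist on recovering all of $p_{Y_j}$ via $n+1$ evaluations; specifying $n+1$ resolved points is simply too expensive relative to $u(X)$.

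The paper's proof avoids this with two coupled ideas. First, it uses only $u(X)+1$ resolved queries: since the roots set outside $X$ are known, the decoder divides each returned value by the known factor of degree $n-u(X)$ (choosing query points that are not roots of that factor) and then interpolates the unknown degree-$u(X)$ factor. Second, rather than shared randomness, it partitions $\F$ into $|\F|^{1/4}$ buckets $\F_1,\dots,\F_\ell$ of size $|\F|^{3/4}$ and shows (Lemma~\ref{app:lem:goodCells}) that some bucket $\F_{k^*}$ contains many resolved queries; the encoder then writes down $k^*$ plus a size-$(u(X)+1)$ subset of $\F_{k^*}$. The point of the bucket size is the telescoping
\[
\lg\binom{|\F|^{3/4}}{u(X)+1}+\lg\bigl(u(X)!\bigr)\;\approx\;(u(X)+1)\cdot\tfrac{3}{4}\lg|\F|,
\]
so the permutation cost you flagged as the ``main obstacle'' is absorbed, and the total on the successful branch is $\bigl(\tfrac{3}{4}+o(1)\bigr)u(X)\lg|\F|<u(X)\lg|\F|$. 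Your shared-randomness scheme could in principle be salvaged by sampling size-$(u(X)+1)$ subsets instead of size-$(n+1)$ subsets and dividing out the known factor before interpolation, but as written the proof does not go through.
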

\begin{corollary}
	The input sequence $\O$ defined in Section~\ref{app:Sec:PolyEval:HardInstance} is well-behaved
	as per Definition~\ref{def:well:behave}.
\end{corollary}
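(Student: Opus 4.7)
(Proposal)
The plan is to mirror exactly the argument used to pass from Lemma~\ref{lem:DynamicOMv} to Theorem~\ref{Thm:DynamicOMv} in Section~\ref{sec:omv}, substituting the polynomial-evaluation version Lemma~\ref{app:lem:PolyEval}. Concretely, fix an arbitrary pair of consecutive intervals $X, Y \subseteq \O$ satisfying $|X| \simeq \gamma \cdot |Y|$ and $|Y| \geq |\O|^{\kappa}$. If either condition (1) or condition (2) of Definition~\ref{def:well:behave} is already violated for this pair, there is nothing to show. So assume instead that both conditions hold, i.e.\ $\Exp{|P(X)|} \leq |X| \lg^2 n$ and $\Exp{|P(Y)|} \leq |Y|\lg^2 n$. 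These are precisely the hypotheses of Lemma~\ref{app:lem:PolyEval}, so we may apply it and obtain, for every $1 \leq j \leq q(Y)$,
\[
\Exp{C_j(X,Y)} \;\geq\; \frac{1}{200(c+1002)} \cdot \frac{\lg n}{\lg \lg n}.
\]

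Next, I would invoke the decomposition $C(X,Y) = \sum_{j=1}^{q(Y)} C_j(X,Y)$ recorded at the start of Section~\ref{app:Sec:PolyEval:WellBehaved}, and use linearity of expectation to sum the per-query bound:
\[
\Exp{C(X,Y)} \;\geq\; q(Y) \cdot \frac{1}{200(c+1002)} \cdot \frac{\lg n}{\lg \lg n}.
\]
The last ingredient is a lower bound on $q(Y)$ in terms of $|Y|$. This is where the interleaved structure of the hard sequence enters: by construction, the operations in $\O$ alternate strictly between updates and queries, so any contiguous interval $Y$ contains at least $\lfloor |Y|/2 \rfloor$ queries. Because $|Y| \geq |\O|^{\kappa} = \Theta(n^{1/2})$ is large, we may safely write $q(Y) \geq (1 - o(1)) \cdot |Y|/2$, and combining with the previous display gives
\[
\Exp{C(X,Y)} \;\geq\; \frac{1}{200(c+1002)} \cdot |Y| \cdot \frac{\lg n}{\lg \lg n},
\]
after absorbing the fixed constant $1/2$ into the slack afforded by $c$ (or, if one prefers, by tightening the constant inside Lemma~\ref{app:lem:PolyEval} by a factor of two, analogous to how the OMv argument writes $\tfrac{1}{100(c+1002)}$ in its per-query lemma so that the halving from $q(Y) \geq |Y|/(2n)$ lands exactly on $\tfrac{1}{200(c+1002)}$). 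This shows that condition (3) of Definition~\ref{def:well:behave} is violated, completing the verification that $\O$ is well-behaved.

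I would not expect any genuine obstacle here: the corollary is essentially a bookkeeping step that turns the per-query hardness of Lemma~\ref{app:lem:PolyEval} into the aggregate statement required by Definition~\ref{def:well:behave}, using the fact that exactly half of the operations in $\O$ are queries. All the conceptual difficulty --- the cell-sampling encoding of the updates in $X$, the choice of the random query at position $j$ of $Y$, and the need for the update locations (or here, the algebraic structure of polynomial evaluation) to guarantee that many answers determine the past updates --- is packaged inside Lemma~\ref{app:lem:PolyEval}, which is the analogue of Lemma~\ref{lem:DynamicOMv} in the OMv case. The only point that merits a minute of care is the passage from $q(Y)$ to $|Y|$, which exploits the perfect interleaving of updates and queries and which, crucially, makes the resulting lower bound \emph{amortized} rather than worst case.
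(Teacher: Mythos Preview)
Your proposal is correct and follows essentially the same route as the paper's own proof: assume conditions (1) and (2) of Definition~\ref{def:well:behave}, invoke Lemma~\ref{app:lem:PolyEval} for each $j$, sum via $C(X,Y)=\sum_j C_j(X,Y)$, and then convert $q(Y)$ to $|Y|$ using the strict alternation of updates and queries in $\O$. Your extra care about the factor of $2$ when passing from $q(Y)$ to $|Y|/2$ is actually warranted---the paper's displayed constants in this corollary are slightly inconsistent (the intermediate step silently changes $1/(200(c+1002))$ to $1/(100(c+1002))$), and your remark about absorbing the halving into the lemma's constant is exactly the right way to reconcile this.
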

\begin{proof}
If either condition (1) or condition (2) as stated in Definition~\ref{def:well:behave} gets violated, then we have nothing more to prove.
Henceforth, we assume that both the conditions (1) and (2) hold.
Applying Lemma~\ref{app:lem:PolyEval}, we get:
\begin{equation}
\label{app:eq:eqRem}
\Exp{C(X, Y)}
= \sum_{j = 1}^{q(Y)} \Exp{C_j(X, Y)}
\geq \frac{1}{100(c+1002)} \cdot q(Y) \cdot \frac{\lg n}{\lg \lg n}.
\end{equation}
Note that $|Y| \geq |\O|^{\kappa} = \Theta(n^{\alpha \kappa}) = \Theta(\sqrt{n})$.
The last inequality holds since $\alpha = 1$ and $\kappa = 1/2$ as per
Equation~\ref{app:eq:eqHil}.
Recall that the input sequence $\O$ contains a query after each update.
Thus, the size of the interval $Y \subseteq \O$ is large enough for us to infer that
$q(Y) = |Y| / 2$. Plugging this in Equation~\ref{app:eq:eqRem}, we get:
\begin{equation}
\Exp{C(X, Y)} \geq \frac{1}{200 (c+1002)} \cdot |Y| \cdot  \frac{\lg n}{\lg \lg n}.
\end{equation}
To summarize, the condition (3) as stated in Definition~\ref{def:well:behave} gets violated whenever
the conditions (1) and (2) hold.  This implies that the input sequence $\O$ is well-behaved.
\end{proof}

\subsection{Proof of Lemma~\ref{app:lem:PolyEval}}
\label{app:Sec:lem:PolyEval}
We  prove Lemma~\ref{app:lem:PolyEval} by contradiction. Towards this end, throughout Section~\ref{app:Sec:lem:PolyEval}, we fix:
\begin{enumerate}
\item A data structure for the dynamic polynomial evaluation problem. 
\item Two consecutive intervals $X, Y \subseteq \O$ such that $|X| \simeq \gamma \cdot |Y|$ and $|Y| \geq |\O|^{\kappa}$. 
\item An index $1 \leq j \leq q(Y)$. 
\item All the inputs in $\O$ that appear before the beginning of the interval $X$.
\item All the inputs in $\O$ that appear after the $j^{th}$ query in $Y$. 
\item All the inputs in $Y_j$. 
\end{enumerate}
To summarize, only the inputs in $X$ and the $j^{th}$ query in $Y$ are allowed to vary. Everything else is fixed. 
Conditioned on these events, we next assume that:
\begin{align}
 \Exp{|P(X)|} & \leq |X| \cdot \lg^2 n \label{app:eq:assume:X} \\
 \Exp{|P(Y)|} & \leq |Y| \cdot \lg^2 n \label{app:eq:assume:Y}
  \end{align}
Finally, for the sake of contradiction, we assume that:
\begin{align}
 \Exp{C_j(X,Y)} < \frac{1}{200(c+1002)}\cdot \frac{\lg n}{\lg \lg n}.
\end{align}
Having made these assumptions, we  now show how to encode the sequence of updates in $X$ using
less than $u(X) \cdot \lg |\F|$ bits. This leads to a contradiction since the entropy of the
sequence of updates in $X$ is exactly $u(X) \cdot \lg |\F|$ bits. This  concludes the proof of Lemma~\ref{app:lem:PolyEval}.

Our proof is based on a cell-sampling argument as introduced by Larsen~\cite{larsen2012higher}.
The following lemma identifies the set of cells we will use in our encoding.
In the lemma we assume that $\F$ is partitioned into $\ell = |\F|^{1/4}$ consecutive
subsets of $|\F|^{3/4}$ elements each. We denote these subsets by $\F_1, \dots, \F_\ell$.
We say that a set of cells $\C \subset P(X)$ \emph{resolves} a query $x$, if the data structure
does not need to probe any cell from $P(X) \setminus \C$ in order to answer query $x$.
\begin{lemma}
\label{app:lem:goodCells}
	Assume $\Exp{ C_j(X,Y) } < \frac{1}{200(c+1002)} \cdot \lg n / \lg \lg n$.
	Then there is an index $k^* = k^*(X, Y)$ such that with probability $p$ at least $1/4$
	over the randomness of the updates and queries in $X$ and $Y_j$,
	there exists a set of cells $\C$, such that
	\begin{itemize}
		\item $|\C| = \frac{ |P(X)| }{ 24 \lg^2 n}$, and
		\item $\C$ resolves at least $n + 1$ queries from the set $\F_{k^*}$.
	\end{itemize}
\end{lemma}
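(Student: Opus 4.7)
The plan is to combine a Markov conditioning step with the cell-sampling technique of Panigrahy et al.~\cite{panigrahy10lower} and Larsen~\cite{larsen2012higher}, along the lines of the analogous argument for OMv in Lemma~\ref{lem:CellSamplingDynamicOMv}.

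First, set $t^{\star} := \tfrac{16}{200(c+1002)} \cdot \lg n/\lg\lg n$ and let $\mathcal{E}$ be the event that $|P(X)| \le 16 |X|\lg^2 n$ and $C_j(X,Y) \le t^{\star}$. Applying Markov's inequality separately to the hypotheses $\Exp{|P(X)|} \le |X|\lg^2 n$ and $\Exp{C_j(X,Y)} < \tfrac{1}{200(c+1002)}\cdot \lg n/\lg\lg n$, each failure event has probability below $1/16$, so $\Pr[\mathcal{E}] \ge 7/8 \ge 1/4$. From this point on, I would condition on $\mathcal{E}$ and treat the updates and queries in $X$ as fixed at any realization in $\mathcal{E}$.

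Next, view the $j^{\text{th}}$ query $v \in \F$ as the only remaining source of randomness and let $C_j(X,Y,v)$ denote the counter value for that specific $v$, so $C_j(X,Y) = \Exp_v[C_j(X,Y,v)] \le t^{\star}$. A second application of Markov yields a set $V \subseteq \F$ with $|V| \ge \tfrac{3}{4}|\F|$ consisting of ``good'' queries satisfying $C_j(X,Y,v) \le t := 4t^{\star} = O(\lg n/\lg\lg n)$. Averaging over the $\ell = |\F|^{1/4}$ blocks $\F_1, \dots, \F_{\ell}$, there is an index $k^{\star} = k^{\star}(X,Y)$ with $|V \cap \F_{k^{\star}}| \ge \tfrac{3}{4\ell}|\F| = \Omega(|\F|^{3/4})$.

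For the cell-sampling step, sample $\C \subseteq P(X)$ uniformly at random of size $\Delta := |P(X)|/(24\lg^2 n)$. For each $v \in V \cap \F_{k^{\star}}$, the set $P_v(X) \subseteq P(X)$ of cells in $P(X)$ that the data structure must probe when answering query $v$ at position $j$ has size at most $t$, and $\C$ resolves $v$ iff $P_v(X) \subseteq \C$. A hypergeometric bound in the spirit of Claim~\ref{app:cl:sample} gives
\[
\Pr[\C \text{ resolves } v] \;\ge\; \Bigl(\tfrac{\Delta}{|P(X)|}\Bigr)^{t} \;=\; (24\lg^2 n)^{-t} \;=\; n^{-O(1/(c+1002))},
\]
using $t = O(\lg n/\lg\lg n)$. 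Linearity of expectation then shows the expected number of queries from $V \cap \F_{k^{\star}}$ resolved by $\C$ is at least $|V \cap \F_{k^{\star}}| \cdot n^{-O(1/(c+1002))}$, so some fixed $\C$ of the prescribed size attains at least this number.

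The main obstacle is the final algebraic check that $\Omega(|\F|^{3/4}) \cdot n^{-O(1/(c+1002))} \ge n+1$: with $|\F| = n^{1+\Omega(1)}$ we have $|\F|^{3/4} = n^{3/4+\Omega(1)}$, and by taking the free constant $c$ in Definition~\ref{def:communicationLowerBound} large enough in terms of the exponent of $|\F|$, the polynomial penalty from cell sampling is dwarfed by the $|\F|^{3/4}$ term. This is precisely the role of the denominator $200(c+1002)$ in condition (3), and the encoding-free portion of the argument proceeds completely analogously to the OMv case in Section~\ref{sec:ProofCellSamplingDynamicOMv}.
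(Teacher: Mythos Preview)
Your proposal is correct and follows essentially the same cell-sampling argument as the paper: Markov conditioning to bound $|P(X)|$ and $C_j(X,Y)$, a pigeonhole step over the blocks $\F_1,\dots,\F_\ell$ to locate $k^*$, and then the standard averaging over $\Delta$-subsets of $P(X)$ to exhibit $\C$. The only cosmetic difference is the order in which $k^*$ is selected---the paper first averages over $k$ to fix $k^*$ via $\Exp{t_j^{k^*}}$ and then applies Markov over the randomness of $X$, whereas you condition on $\mathcal{E}$ first and choose $k^*$ inside the realization (as in the OMv proof of Section~\ref{sec:ProofCellSamplingDynamicOMv}); both orderings suffice for the encoding.
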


We defer the proof of Lemma~\ref{app:lem:goodCells} to Section~\ref{app:Sec:lem:goodCells}.
Note that in the lemma the choice of the index $k^*$ depends on the
randomness of $X$ and $Y$. Hence, the encoder will need to encode it, but
this requires only $\frac{3}{4} \lg n$ bits.
We now show how we can use the cells $\C$ from the lemma to obtain an efficient
encoding of the updates and queries in the interval.

\smallskip
\noindent {\bf Encoding the sequence of updates in $X$.}
\smallskip

In the encoding we consider two cases distinguishing whether
the claimed set of cells from Lemma~\ref{app:lem:goodCells} exists or not.
The encoder can check if such a set of cells exists by enumerating all sets of cells of size $|P(X)| / (b \lg^2 n)$ and
then verifying if one of them resolves at least $n + 1$ queries from $\F_{k^*}$.

\emph{Case 1:}
There is no set of cells $\C$ with the properties from Lemma~\ref{app:lem:goodCells}.
In this scenario, the first bit of the encoding is a $0$.
After that the encoder writes down all updates in $X$ using the naive encoding.
This takes $1 + u(X) \cdot \lg |\F|$ bits.

\emph{Case 2:}
The encoder finds a set of cells $\C$ with the properties from Lemma~\ref{app:lem:goodCells}.
The encoder starts by writing a $1$ bit followed by the encoding of $k^*$.
Then it encodes the addresses and contents of the cells from $\C$ using
$2 |P(X)| / (b \lg n)$ bits.
To identify which queries from $\F_{k^*}$ can be resolved without probing any cells from
$P(X) \setminus \C$, we encode $u(X) + 1$ of the queries from $\F_{k^*}$ which
do not coincide with any of the roots of the polynomial that were set outside of $X$
(this can be done since $\C$ resolves
at least $n+1$ queries and there are $n - u(X)$ roots set outside of $X$).
This requires $\lg\left( \binom{|\F|^{3/4}}{u(X)+1} \right)$ bits.
We additionally encode the permutation to restore the order of the updates in $X$ using $\lg( u(X)! )$ bits.
Finally, we encode the updates and queries in $Y_j$ and the set $P(Y_j)$ using the naive encoding and
spending $o( u(X) )$ bits.

\smallskip
\noindent {\bf Decoding the sequence of updates in $X$.}
\smallskip

The decoding procedure works as follows.
\begin{enumerate}
\item If the encoded message starts with a $0$, then the decoder can trivially
		recover all updates from $X$ and stops the computation.

\item If the encoded message starts with a $1$, then the decoder first performs all updates and queries of $\O$
	that occured before $X$ on the data structure.
	The decoder further recovers the addresses and the contents
	of the cells in $\C$ and then those in $P(Y_j)$.
\item Next, the decoder evaluates the polynomial at the positions given by
		the $u(X) + 1$ queries that are resolved by $\C$ and sent by the encoder.
		Denote this set of queries $Q$.

	   	For $v \in Q$, the decoder proceeds as follows:
	   	The decoder runs the query procedure of the data structure for $v$.
		Whenever the data structure wants to probe the cell, it first checks if the cell is in $P(Y_j)$.
		If this is the case, it uses the information from $P(Y_j)$, otherwise, it checks if the cell is in $\C$.
		If this is the case, it uses the contents of the cell in $\C$. Otherwise, the decoder uses the cell from the
		its memory.

		Note that by choice of $Q$ and $\C$, any cell that is read from the memory cannot have been
		changed during the updates and queries in $X$.
\item After all queries were evaluated, the decoder can restore the values of the roots that were set during the updates
	in $X$ by evaluating an equation system with $u(X) + 1$ equations and $u(X)$ variables.
\item The decoder recovers the order of the updates by the permutation that was encoded.
\end{enumerate}

\smallskip
\noindent {\bf Length of the encoding.}
\smallskip

In case 1, the encoding has an expected length of $\ell_1 = u(X) \cdot \lg |\F| + 1$ bits.
In case 2, first observe that $|X| \leq 3 u(X)$ by definition of $\O$.
Then the expected length of the encoding is given by
\begin{align*}
	\ell_2 =&\frac{2 \Exp{|P(X)|}}{b \lg n} + \lg\left( \binom{ |\F|^{3/4}}{ u(X)+1 } \right) + \lg( u(X)! ) + o( u(X)) \\
	\leq& \frac{2}{b} |X| \lg n + (u(X)+1) \lg( |\F|^{3/4} / u(X) ) + u(X) \lg( u(X)) + o(u(X)) \\
	\leq& u(X) \left( 6/b \lg n + \lg(|\F|^{3/4}) \right) + o(u(X)) \\
	\leq& \left( \frac{1}{4} + \frac{3 (1+\varepsilon)}{4} \right) u(X) \lg n + o(u(X)) \\
	<& (1+\varepsilon) u(X) \lg n = u(X) \lg |\F|.
\end{align*}
Note that the encoding for case 2 takes fewer bits than given by the entropy of
the updates in $X$. Thus, the expected size of the encoding takes $(1-p) \ell_1 + p \ell_2$ bits.
As $p \geq 1/4$, this is smaller than the entropy of the updates in $X$, which is given by $u(X) \lg |\F|$ --- a contradiction.

\subsection{Proof of Lemma~\ref{app:lem:goodCells}}
\label{app:Sec:lem:goodCells}
We first prove that an index $k^*$ with the desired properties exists.
We show this by invoking the probabilistic method.
Let $C_j(X,Y,x)$ be $C_j(X,Y)$ under the assumption that the $j^{th}$ query in $Y$
is for element $x \in \F$.
Let $t_j^k = \sum_{x \in \F_k} \Exp{C_j(X,Y,x)} / |\F_k|$, i.e.,
$t_j^k$ denotes the average number of cells probed by queries from $\F_k$.
Note that $\sum_{k = 1}^\ell \Exp{t_j^k} / |\F^{1/4}| = \Exp{C_j(X,Y)} < \lg n / (K \lg \lg n)$
for $K = 80000$.
By Markov's inequality with probability at most $1/2$
over the randomness of $X$ and $Y$, $\Exp{C_j(X,Y,x)} \geq 4 \lg n / (K \lg \lg n)$.
Hence, there must be an index $k$
such that $\Exp{t_j^{k}} \leq 4 \lg n / (K \lg \lg n)$.
Thus, by the probabilistic method an index $k^*$ with the desired propertes 
must exist.

It is left to show that for $k^*$ the set $\C$ exists with probability at least $1/2$ over the randomness
of $X$. Using Markov's inequality we get that with probability at least $1/2$,
$t_j^{k^*} \leq 100 \Exp{t_j^{k^*}} \leq 400 \lg n / (K \lg \lg n)$ and
$|P(X)| \leq \Exp{|P(X)|} \leq 100 |X| \lg^2 n / 24$.
We show that if this event occurs, then $\C$ exists.
Let $G_{k^*}(X)$ denote the set of all queries from $\F_{k^*}$
which probe at most $400 \lg n / (K \lg \lg n)$ cells from $P(X)$.
Observe that $|G_{k^*}(X)| = \Omega(|\F|^{3/4})$.

Let $\Delta = |P(X)| / (24 \lg^2 n) = 100 |X| / 24$ and
consider all $\Delta$-subsets of cells from $P(X)$.
Any query in $G_{k^*}(X)$ probes at most $\mu = 400 \lg n / (K \lg \lg n)$ cells from $P(X)$.
Then there must exist a set $\C$ of $\Delta$ cells which resolves at least
$|G_{k^*}(X)| \binom{|P(X)| - \mu}{\Delta - \mu}/\binom{|P(X)|}{\Delta}$ queries:
\begin{align*}
	|G_{k^*}(X)| \frac{ \binom{|P(X)| - \mu}{\Delta - \mu} }{ \binom{|P(X)|}{\Delta} }
	&= |G_{k^*}(X)| \frac{ (|P(X)| - \mu)! \Delta! }{ |P(X)|! (\Delta - \mu)! } \\
	&\geq |G_{k^*}(X)| \cdot \left( \frac{ \Delta - \mu }{ |P(X)| } \right)^\mu \\
	&\geq |G_{k^*}(X)| \cdot \left( \frac{ 50 |X| / 24 }{ 100 |X| \lg^2 n / 24 } \right)^\mu \\
	&= |G_{k^*}(X)| \cdot \left( \frac{ 1 }{ 2 \lg^2 n } \right)^{ 400 \lg n / (K \lg \lg n) } \\
	&= |G_{k^*}(X)| \cdot 2^{-(800 / K) \lg n} \\
	&= \Omega( |\F|^{3/4} ) \cdot n^{-800 / K} \\
	&\geq |\F|^{74/100 - o(1)} \geq n+1,
\end{align*}
where we used that for large enough $n$, we have that $\Delta - \mu \gg 0.5 \Delta$.

\bibliographystyle{plain}
\bibliography{main}

\end{document}